\definecolor{dark-red}{rgb}{0.4,0.15,0.15}
\definecolor{dark-blue}{rgb}{0.15,0.15,0.4}
\definecolor{medium-blue}{rgb}{0,0,0.5}
\definecolor{gray}{rgb}{0.5,0.5,0.5}
\title{Fine-Grained Parameterized Complexity Analysis of
Graph Coloring Problems\footnote{This research was partially
funded by the Networks programme via the Dutch Ministry of Education, 
Culture and Science through the Netherlands Organisation for Scientific
Research. The second author was supported by NWO Veni grant ``Frontiers in
Parameterized Preprocessing''.}}
\author{Lars Jaffke\thanks{The research was done while this author was at
CWI, Amsterdam.}}
\affil{Department of Informatics\\
	University of Bergen\\
	Postboks 7803, N-5020 Bergen, Norway\\
  \texttt{l.jaffke@uib.no}}
\author{Bart M.\ P.\ Jansen}
\affil{Department of Mathematics and Computer Science\\
	Eindhoven University of Technology\\
	P.O. Box 513, 5600 MB Eindhoven, The Netherlands\\
  \texttt{b.m.p.jansen@tue.nl}}
\theoremstyle{plain}
\newtheorem{theorem}{Theorem}
\newtheorem{lemma}[theorem]{Lemma}
\newtheorem{proposition}[theorem]{Proposition}
\newtheorem{conjecture}[theorem]{Conjecture}
\newtheorem{observation}[theorem]{Observation}
\newtheorem*{propositionnonr}{Proposition}
\newtheorem{corollary}[theorem]{Corollary}
\theoremstyle{definition}
\newtheorem{definition}[theorem]{Definition}
\theoremstyle{remark}
\newtheorem*{remark*}{Remark}
\newtheorem*{claim*}{Claim}
\newtheorem*{observation*}{Observation}
\newtheorem*{openProblem*}{Open Problem}
\theoremstyle{plain}
\newtheorem{claim}[theorem]{Claim}
\newenvironment{claimproof}{\begin{proof}\renewcommand{\qedsymbol}{\claimqed}}{\end{proof}\renewcommand{\qedsymbol}{\plainqed}}
\let\plainqed\qedsymbol
\begin{document}

\maketitle

\abstract{The \textProb{$q$-Coloring} problem asks whether the vertices of a graph can be
	properly colored with $q$ colors. Lokshtanov et al. [SODA 2011] showed that
	\textProb{$q$-Coloring} on graphs with a feedback vertex set of size $k$ cannot
	be solved in time $\cO^*((q-\varepsilon)^k)$, for any $\varepsilon > 0$, unless
	the Strong Exponential-Time Hypothesis ($\SETH$) fails. In this paper we
	perform a fine-grained analysis of the complexity of \textProb{$q$-Coloring}
	with respect to a hierarchy of parameters. 
 	We show that even when parameterized by the vertex cover number, $q$ must
 	appear in the base of the exponent: Unless $\ETH$ fails, there is no universal 
 	constant $\theta$
 	such that \textProb{$q$-Coloring} parameterized by vertex cover can be solved
 	in time $\cO^*(\theta^k)$ for all fixed $q$.
	We apply a method due to Jansen and Kratsch [Inform. \& Comput. 2013] to prove
	that there are $\cO^*((q - \varepsilon)^k)$ time algorithms where $k$ is the
	vertex deletion distance to several graph classes $\cF$ for which \textProb{$q$-Coloring} is
	known to be solvable in polynomial time. 
	We generalize earlier ad-hoc results by showing that if $\cF$ is a class of
	graphs whose $(q+1)$-colorable members have bounded treedepth, then there
	exists some $\varepsilon > 0$ such that \textProb{$q$-Coloring} can be solved in time
	$\cO^*((q-\varepsilon)^k)$ when parameterized by the size of a given modulator
	to $\cF$.
	In contrast, we prove that if $\cF$ is the class of paths -- some of
	the simplest graphs of unbounded treedepth -- then no such algorithm can exist
	unless $\SETH$ fails.
}

\section{Introduction}
In an influential paper from 2011, Lokshtanov et al.\ showed that for several
problems, straightforward dynamic programming algorithms for graphs of bounded
treewidth are essentially optimal unless the Strong Exponential Time Hypothesis ($\SETH$) fails
\cite{LMS11}. (Section \ref{secExpTimeHyp} gives the definitions of the two Exponential Time Hypotheses, see \cite[Chapter 14]{CFKLMPPS15} or the survey \cite{Vas15} for further details.)
Some of the lower bounds, as the one for \textProb{$q$-Coloring}, even hold for
parameters such as the feedback vertex number, which form an upper bound on
the treewidth but may be arbitrarily much larger. For other problems such as
\textProb{Dominating Set}, the tight lower bound of
$\Omega((3-\varepsilon)^k)$ holds for the parameterization pathwidth, but is not
known for the parameterization feedback vertex set. In general, moving to a
parameterization that takes larger values might enable running times with a
smaller base of the exponent.
In this paper, we therefore investigate the parameterized complexity of the
\textProb{$q$-Coloring} and \textProb{$q$-List-Coloring} problems from a more
fine-grained perspective.
\par
In particular, we consider a hierarchy of graph parameters --- ordered by
their expressive strength --- which is a common method in parameterized
complexity, see e.g.\ \cite{FJR13} for an introduction. 
One of the strongest parameters for a graph problem is the number of vertices in a graph, 
in the following denoted by $n$. 
Bj\"{o}rklund et al.\ showed that the chromatic number $\chi(G)$ (the smallest
number of colors $q$ such that $G$ is $q$-colorable) of a graph $G$ can be
computed in time $\cO^*(2^n)$ \cite{BHK09}, so the base of the exponent in the runtime 
of the algorithm is independent of the value of $\chi(G)$. We show
that if you consider a slightly weaker parameter, the size $k$ of a vertex cover of
$G$, it is very unlikely that there is a constant $\theta$, 
such that \textProb{$q$-Coloring} can be solved in time $\cO^*(\theta^k)$
for all fixed $q \in \cO(1)$: It would imply that $\ETH$ is false. 
\par
However, we show that there is a simple algorithm that solves \textProb{$q$-Coloring} 
parameterized by vertex cover, and for which the base of the exponential in its runtime
is strictly smaller than the base $q$ that is potentially optimal for the treewidth 
parameterization.
(A proof of the following proposition is deferred to the beginning of Section \ref{secUB}.)
\begin{proposition}\label{thmVC3Col}
	There is an algorithm which decides whether a graph $G$ is $q$-colorable and runs in time $\cO^*((q-1.11)^k)$, where $k$ denotes the size of a \emph{given} vertex cover of $G$.
\end{proposition}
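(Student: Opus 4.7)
The plan is to beat the trivial $\cO^*(q^k)$ enumeration of colorings of the given vertex cover $S$ (in which each coloring is checked for polynomial-time extensibility to $I := V \setminus S$) by combining preprocessing with a faster sub-routine. First, I would reduce $I$ by discarding every vertex $u \in I$ with $|N(u)| \leq q-1$, since any such $u$ is always greedily colorable regardless of how $S$ is colored, and by merging false twins in $I$ whose constraints coincide. What remains of $I$ contributes only nontrivial ``non-rainbow'' constraints: each asserts that the $\geq q$ neighbors of some $u \in I$ must not collectively receive all $q$ colors.

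Next, rather than running through all $q^k$ colorings of $S$, I would enumerate them more cleverly by interleaving two ingredients. The first is a fast coloring subroutine on $G[S]$: for small $q$ the Beigel--Eppstein $\cO^*(1.3289^n)$ algorithm for 3-coloring applied to $G[S]$ (with $n \leq k$) is considerably below the target, and for larger $q$ the Björklund--Husfeldt--Koivisto $\cO^*(2^n)$ chromatic-number algorithm provides an $\cO^*(2^k)$ baseline that is less than $q^k$ whenever $q \geq 3$. The second ingredient is a branching rule on the reduced $I$-constraints: for each $u \in I$ with $|N(u) \cap S| \geq q$, there are only $q$ possibilities for the color $c \in [q]$ that is \emph{absent} from $c(N(u) \cap S)$; branching on $c$ simultaneously forbids $c$ on every vertex of $N(u) \cap S$, restricting many vertices of $S$ to at most $q-1$ colors. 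Iterating this branching rule until no $I$-constraint remains and then invoking the fast coloring subroutine on the resulting restricted instance should yield a running time whose exponent beats $q^k$ by a constant factor in the base.

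The main obstacle is obtaining the specific constant $1.11$ uniformly in $q$. The difficult regime is small $q$, especially $q = 3$, where $(q - 1.11)^k = 1.89^k < 2^k$, so the Björklund--Husfeldt--Koivisto algorithm alone gives no savings; one genuinely needs to exploit the structure of 3-coloring together with the $I$-constraint branching. For general $q$, the trade-off between the number of branches per $I$-constraint ($q$) and the number of $S$-vertices whose domain shrinks from $q$ to $q-1$ per branch ($\geq q$) yields a recurrence whose worst-case solution must be shown to fall below $(q-1.11)^k$ once combined with the coloring subroutine on the residual instance. The constant $1.11$ should emerge as the solution of this balancing optimization; verifying it uniformly in $q$ is the key quantitative step of the proof.
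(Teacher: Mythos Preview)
Your proposal is not yet a proof: it is a collection of plausible ingredients together with an explicit admission that the key quantitative step---obtaining the constant $1.11$ uniformly in $q$---is missing. The branching rule you describe (branch on the absent color of $N(u)$, giving $q$ branches each of which restricts $\geq q$ vertices of $S$ to $q-1$ colors) does not by itself yield a recurrence with solution $(q-1.11)^k$: after branching you are left with a \emph{list}-coloring instance on $G[S]$ with mixed domain sizes, and neither the Beigel--Eppstein nor the Bj\"orklund--Husfeldt--Koivisto subroutine directly solves such instances in the claimed time. In particular, for $q=3$ you need a bound strictly below $1.89^k$, and nothing in your outline delivers one; you write that the constant ``should emerge'' from a balancing optimization, but no such optimization is carried out.

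The paper's argument is entirely different and much more direct. It recurses on $q$ rather than on the structure of $I$: in any proper $q$-coloring some color class meets the vertex cover $X$ in at most $\lfloor k/q\rfloor$ vertices, so the algorithm enumerates all independent subsets $S\subseteq X$ of that size, greedily extends $S$ to a maximal independent set $S'$, and recurses on $G-S'$ with $q-1$ colors. For $q=3$ the number of subsets is at most $\sum_{\ell\le k/3}\binom{k}{\ell}\le 2^{H(1/3)k}\le 1.89^k$, which already gives the base case. For the induction step one simply bounds
\[
\sum_{\ell=0}^{\lfloor k/q\rfloor}\binom{k}{\ell}(q-2.11)^{k-\ell}\;\le\;\sum_{\ell=0}^{k}\binom{k}{\ell}(q-2.11)^{k-\ell}\cdot 1^{\ell}\;=\;(q-1.11)^k
\]
by the binomial theorem. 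The constant $1.11$ is thus not the output of a delicate trade-off but is inherited directly from the entropy bound $2^{H(1/3)}\approx 1.89 = 3-1.11$ at the base case.
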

\par
On the other hand, the above algorithm does not obviously generalize to other
parameterizations. To derive more general results about obtaining non-trivial
runtime bounds for parameterized \textProb{$q$-Coloring}, we study graph classes
with small \emph{vertex modulators} to several graph classes $\cF$: Given a
graph $G = (V, E)$, a vertex modulator $X \subseteq V$ to $\cF$ is a subset of its
vertices such that if we remove $X$ from $G$ the resulting graph is a member of
$\cF$, i.e.\ $G - X \in \cF$. If $|X| \le k$, we say that $G \in \cF+kv$. (For
example, graphs that have a vertex cover of size at most $k$ are
$\mbox{\textsc{Independent}}+kv$ graphs.) Hence, we study the following
problems which were first investigated in this parameterized setting by Cai \cite{Cai03}.
\parproblemdef
	{$q$-(List-)Coloring on $\cF+kv$ Graphs}
	{An undirected graph $G$ and a modulator $X \subseteq V(G)$ such that $G - X
	\in \cF$ (and lists $\Lambda \colon V \to 2^{[q]}$).} 
	{$|X| = k$, the size of the modulator.}
	{Can we assign each vertex $v$ a color (from its list $\Lambda(v)$) such that
	adjacent vertices have different colors?}
Given a \textsc{No}-instance $(G, \Lambda)$ of \textProb{$q$-List-Coloring} we
call $(G', \Lambda')$ a \textsc{No}-subinstance of $(G, \Lambda)$, if $G'$ is an induced subgraph of $G$ and for all
vertices $v \in V(G')$: $\Lambda(v) = \Lambda'(v)$ such that $(G', \Lambda')$
is also \textsc{No}.
We show that if a graph class $\cF$ has small \textsc{No}-certificates for 
\textProb{$q$-List-Coloring} 
then \textProb{$q$-(List-)Coloring} on
$\cF+kv$ graphs can be solved in time $\cO^*((q-\varepsilon)^k)$, for some
$\varepsilon > 0$.
This notion was introduced by Jansen and Kratsch to prove the existence of polynomial
kernels for said parameterizations \cite{JK13}. \par
In addition to that, we give some further structural insight into hereditary graph classes
$\cF$, for which $\cF+kv$ graphs have non-trivial algorithms: We show that if
the $(q+1)$-colorable members of $\cF$ have bounded treedepth, then
$\cF+kv$ has $\cO^*((q-\varepsilon)^k)$ time algorithms for
\textProb{$q$-Coloring} when parameterized by the size $k$ of a given
modulator, for some $\varepsilon > 0$.
We prove that this \emph{treedepth-boundary} is in some sense tight: Arguably the most simple
graphs of unbounded treedepth are paths. We show that
\textProb{$q$-Coloring} cannot be solved in time $\cO^*((q-\varepsilon)^k)$ for
any $\varepsilon > 0$ on $\mbox{\textsc{Path}}+kv$ graphs, unless $\SETH$ fails
--- strengthening the lower bound for $\mbox{\textsc{Forest}}+kv$ graphs \cite{LMS11} via a somewhat simpler construction.
Using this strengthened lower bound, we prove that if a hereditary graph class $\cF$ excludes a complete bipartite graph $K_{t, t}$ for some constant $t$, then $\cF+kv$ has $\cO^*((q-\varepsilon)^k)$ time algorithms for \textProb{$q$-(List-)Coloring} \emph{if and only if} the $(q+1)$-colorable members of $\cF$ have bounded treedepth.
\par
The rest of the paper is organized as follows: In Section \ref{secPrel} we give
some fundamental definitions used throughout the paper. We present some upper
bounds in the hierarchy in Section \ref{secUB} and lower bounds in Section
\ref{secLB}. In Section \ref{secTD} we present the aforementioned tight relationship between the parameter treedepth and the existence of algorithms for \textProb{$q$-Coloring} with nontrivial runtime and we give concluding remarks in Section \ref{secConc}.

\section{Preliminaries}\label{secPrel}
We assume the reader to be familiar with the basic notions in
graph theory and parameterized complexity and refer to 
\cite{CFKLMPPS15, Die10, DF13, FG06} 
for an introduction. We now give the most important definitions which are
used throughout the paper. \par
We use the following notation: For $a, b \in \bN$ with $a < b$, $[a] =
\{1,\ldots,a\}$ and $[a..b] = \{a, a+1,\ldots,b\}$. The $\cO^*$-notation
suppresses polynomial factors in the input size $n$, i.e.\ $\cO^*(f(n,\cdot)) =
\cO(f(n, \cdot)\cdot n^{\cO(1)})$. For a function $f \colon X \to Y$, we denote by
$f_{\mid X'}$ the restriction of $f$ to $X' \subseteq X$.

\subsection{Graphs and Parameters}
Throughout the paper a graph $G$ with vertex set $V(G)$ and edge set $E(G)$ is
finite and simple. We sometimes shorthand $'V(G)'$ ($'E(G)'$) to $'V'$ ($'E'$)
if it is clear from the context.
For graphs $G$, $G'$ we denote by $G' \subseteq G$ that $G'$ is a subgraph of $G$,
i.e.\ $V(G') \subseteq V(G)$ and $E(G') \subseteq E(G)$. We often use the
notation $n = |V|$ and $m = |E|$. For a vertex $v \in V(G)$, we denote by $N_G(v)$ (or simply $N(v)$, if $G$ is clear from the context) the set of \emph{neighbors} of $v$ in $G$, i.e.\ $N_G(v) = \{w \in V(G) \mid \{v, w\} \in E(G)\}$.
\par
For a vertex set $V'
\subseteq V(G)$, we denote by $G[V']$ the subgraph \emph{induced} by $V'$, i.e.\
$G[V'] = (V', E(G) \cap V' \times V')$. A graph class $\cF$ is called
\emph{hereditary}, if it is closed under taking induced subgraphs. 
\par
We now list a
number of graph classes which will be important for the rest of the paper. 
A graph $G$ is \emph{independent}, if $E(G) = \emptyset$.
A \emph{cycle} is a connected graph all of whose vertices have degree two. A
graph is a \emph{forest}, if it does not contain a cycle as an induced subgraph
and a \emph{linear forest} if additionally its maximum degree is at most two.
A connected forest is a \emph{tree} and a tree of maximum degree at most two is a
\emph{path}.
A graph $G$ is a \emph{split graph}, if its vertex set $V(G)$ can be partitioned
into sets $W, Z \subseteq V(G)$ such that $G[W]$ is a clique and $G[Z]$ is
independent. We define the class \textsc{$\bigcup$Split}
containing all graphs that are disjoint unions of split graphs.
A graph $G$ is a \emph{cograph} if it does not contain $P_4$, a path on four vertices, as an 
induced subgraph. A graph is \emph{chordal}, if it does not have a cycle of length at least 
four as an induced subgraph. A cochordal graph is the edge complement of a chordal graph 
and the class \textsc{$\bigcup$Cochordal} contains all graphs that are disjoint 
unions of cochordal graphs.

\begin{definition}[Parameterized Problem]
	Let $\Sigma$ be an alphabet. A \emph{parameterized problem} is a set $\Pi
	\subseteq \Sigma^* \times \bN$, the second component being the 
	\emph{parameter} which usually expresses a structural measure of the input. 
	A parameterized problem is (strongly uniform)
	\emph{fixed-parameter tractable} (fpt) if there exists an algorithm to decide
	whether $\langle x, k\rangle \in \Pi$ in time $f(k)\cdot|x|^{\cO(1)}$
	where $f$ is a computable function.
\end{definition}

The main focus of our research is how the function $f(k)$ behaves for
\textProb{$q$-Coloring} w.r.t.\ different structural graph parameters, such as
the size of a vertex cover. 

In this paper we study a \emph{hierarchy of parameters}, a term which we will
now discuss. For a detailed introduction we refer to \cite[Section 3]{FJR13}.
For notational convenience, we denote by $\Pi_{p}$ a parameterized problem with
parameterization $p$.
Suppose we have a graph problem and two parameterizations $p(G)$ and
$p'(G)$ regarding some structural graph measure. We call parameterization
$p'(G)$ \emph{larger} than $p(G)$ if there is a function $f$, such that 
$f(p'(G)) \ge p(G)$ for all graphs $G$.
Modulo some technicalities, we can then observe that if a problem $\Pi_p$ is fpt, then
$\Pi_{p'}$ is also fpt. This induces a partial ordering on all parameterizations
based on which a hierarchy can be defined.

\subsection{Exponential-Time Hypotheses}\label{secExpTimeHyp}
In 2001, Impagliazzo et al.\ made two conjectures about the complexity of
\textProb{$q$-SAT} --- the problem of finding a satisfying assignment for a
Boolean formula in conjunctive normal form with clauses of size at most $q$
\cite{IP01, IPZ01}. These conjectures are known as the Exponential-Time
Hypothesis ($\ETH$) and Strong Exponential-Time Hypothesis ($\SETH$), formally
defined below. For a survey of conditional lower bounds based on such
conjectures, see \cite{Vas15}.
\par
\begin{conjecture}[$\ETH$ \cite{IP01}]
	There is an $\varepsilon > 0$, such that \textProb{3-SAT} on $n$ variables
	cannot be solved in time $\cO^*(2^{\varepsilon n})$.
\end{conjecture}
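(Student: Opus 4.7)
The statement in question is not a theorem but the Exponential Time Hypothesis itself, stated as \texttt{conjecture} and attributed to Impagliazzo and Paturi. It is a celebrated open problem, and any proof would immediately imply $\P \neq \NP$ (and considerably more), so there is no serious ``proof plan'' to propose in the usual sense. The best I can do is indicate the shape such an attempt would have to take and why each step is currently out of reach.

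The direct plan would be to exhibit, in a sufficiently strong model of computation (e.g.\ the word RAM, or alternatively uniform circuits of bounded fan-in), a concrete sequence of $3$-CNF formulas $\{\varphi_n\}$ on $n$ variables together with an unconditional argument that every correct algorithm must, on infinitely many $\varphi_n$, perform at least $2^{\varepsilon n}/n^{O(1)}$ steps for some fixed $\varepsilon>0$. A natural warm-up step, before tackling arbitrary \textProb{3-SAT}, is to restrict to sparse instances: by the Sparsification Lemma of Impagliazzo, Paturi and Zane, \textProb{3-SAT} on $n$ variables reduces in subexponential time to a disjunction of instances with $O(n)$ clauses, so it suffices to prove the lower bound for linear-size formulas. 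One would then hope to exploit the combinatorial rigidity of sparse $3$-CNFs --- for instance via expansion arguments on the variable-clause incidence graph, or via PCP-style amplification of a weaker worst-case hardness --- to force any algorithm into an exponential search.

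The main obstacle is that this final step is, in effect, the whole problem: all known lower-bound techniques (diagonalization, relativizing reductions, algebraic lower bounds, and arguments that evade the natural-proofs barrier of Razborov and Rudich) fall short of ruling out $2^{o(n)}$ algorithms for any problem in $\NP$, let alone for \textProb{3-SAT}. For this reason $\ETH$ is universally adopted as a hypothesis rather than established as a theorem, and it is in this spirit that the present paper --- like essentially all of fine-grained parameterized complexity --- will assume $\ETH$ (and later $\SETH$) and derive conditional lower bounds from it, rather than attempting to prove it.
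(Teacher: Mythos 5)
You correctly observe that this is not a theorem to be proved but the Exponential Time Hypothesis itself, which the paper (like the rest of the literature) simply states as a conjecture due to Impagliazzo and Paturi and then uses as a hardness assumption for conditional lower bounds. The paper offers no proof and none is expected; your assessment agrees with the paper's treatment.
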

\begin{conjecture}[$\SETH$ \cite{IP01, IPZ01}]
	For every $\varepsilon > 0$, there is a $q \in \cO(1)$ such that \textProb{$q$-SAT} on $n$
	variables cannot be solved in time $\cO^*((2-\varepsilon)^n)$.
\end{conjecture}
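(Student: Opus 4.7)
The statement is the Strong Exponential-Time Hypothesis ($\SETH$), which is a \emph{conjecture}, not a theorem; no proof is currently known, and establishing it would resolve several central open problems in complexity theory. Any ``proof proposal'' can therefore only sketch what shape a proof would have to take and where the fundamental obstacles lie.

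A direct attack would fix an arbitrary $\varepsilon > 0$ and, for each such $\varepsilon$, exhibit a constant $q = q(\varepsilon)$ together with an infinite family of $q$-SAT instances on $n$ variables that provably require time more than $\cO^*((2-\varepsilon)^n)$ on every algorithm. The natural structure would combine (i) a \emph{sparsification}-style step, showing that it suffices to rule out fast algorithms on $q$-SAT instances of bounded clause density as $q \to \infty$, with (ii) an unconditional time or circuit-size lower bound against these structured instances, perhaps via a carefully tailored diagonalization or an extractor/PRG construction attuned to the $q$-SAT solution space. A slightly weaker programme would aim to derive $\SETH$ from $\ETH$ or from some intermediate hardness assumption, but all known reductions go the other way and this route does not appear substantively easier.

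The hard part, and the reason $\SETH$ remains only a hypothesis, is step (ii). Any proof of the required form would immediately yield an unconditional separation of polynomial time from $\mbox{NP}$ (since a polynomial-time SAT algorithm trivially beats $(2-\varepsilon)^n$), and with a modest amount of padding a separation of $\mbox{NP}$ from subexponential non-uniform circuits. No known technique produces $2^{\Omega(n)}$ size lower bounds against general circuits for any explicit $\mbox{NP}$ language, and the standard barriers --- relativization, the natural-proofs barrier of Razborov--Rudich, and algebrization --- block every general-purpose approach to such bounds. One also has to contend with positive algorithmic results: polynomial-method and local-search techniques already obtain non-trivial savings for many structured $\mbox{SAT}$ variants, so the hard instance family must be designed so as to evade all currently known algorithmic paradigms simultaneously.

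For these reasons, the honest plan is essentially negative. I would not expect to produce an actual proof. The responsible course, and indeed the one adopted in this paper, is to take $\SETH$ as a working assumption and to derive conditional lower bounds for \textProb{$q$-Coloring} and related problems under it, rather than to attempt an unconditional argument. The main obstacle is thus not a specific technical lemma to be filled in, but the entire frontier of modern circuit complexity.
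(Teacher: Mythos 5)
You are correct that the statement is a conjecture, not a theorem: the paper does not prove $\SETH$ but simply states it (with citations to Impagliazzo, Paturi, and Zane) as a working hypothesis from which conditional lower bounds are derived, which is exactly the stance you take. Your discussion of why an unconditional proof is out of reach is sound but tangential; the short answer is that there is nothing to prove here, and your conclusion agrees with the paper.
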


\section{Upper Bounds}\label{secUB}
In this section we present upper bounds for parameterized
\textProb{$q$-Coloring}.
In particular, in Section \ref{secUBNoCert} we show that if a graph class $\cF$ has
\textsc{No}-certificates of constant size, then there exist
$\cO^*((q-\varepsilon)^k)$ time algorithms for \textProb{$q$-Coloring} on
$\cF+kv$ graphs for some $\varepsilon > 0$ depending on $\cF$.
In Section \ref{secUBTD} we show that if the $(q+1)$-colorable members of a hereditary
graph class $\cF$ have bounded treedepth, then $\cF$ has \textsc{No}-certificates of small size. 
\par
We begin by proving Proposition \ref{thmVC3Col} and repeat its statement.
\begin{propositionnonr}
	There is an algorithm which decides whether a graph $G$ is $q$-colorable and runs in time $\cO^*((q-1.11)^k)$, where $k$ denotes the size of a \emph{given} vertex cover of $G$.
\end{propositionnonr}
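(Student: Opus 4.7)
The plan is to induct on $q$. For $q=2$, the polynomial-time bipartiteness check suffices. For $q \ge 3$, the algorithm rests on the following pigeonhole observation: in any proper $q$-coloring of $G$, at least one color class $C^\star$ satisfies $|C^\star \cap X| \le \lfloor k/q\rfloor$, since the $q$ class-intersections partition $X$. Accordingly, the algorithm iterates over every independent subset $A \subseteq X$ of size at most $\lfloor k/q\rfloor$, forms the saturated set
\[
A^\star \;=\; A \cup \{\,v \in V\setminus X : N(v)\cap A = \emptyset\,\},
\]
and recursively asks whether $G - A^\star$ is $(q-1)$-colorable, using $X\setminus A$ (of size $\le k-|A|$) as the vertex cover for the recursive call. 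A ``yes'' answer is lifted to a proper $q$-coloring of $G$ by assigning a fresh color to $A^\star$.

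Correctness is short. $A^\star$ is always independent in $G$: edges inside $A$ are ruled out by assumption, edges inside $V\setminus X$ by $X$ being a vertex cover, and edges between $A$ and the added $(V\setminus X)$-vertices by the very definition of $A^\star$. For completeness, pick a $q$-coloring of $G$ with smallest class $C^\star$ on $X$ and set $A = C^\star \cap X$; then $|A| \le \lfloor k/q \rfloor$, and for every $v \in C^\star \setminus X$ we have $N(v) \cap A \subseteq N(v) \cap C^\star = \emptyset$, so $C^\star \subseteq A^\star$, whence the remaining $q-1$ color classes restrict to a proper $(q-1)$-coloring of $V \setminus A^\star \subseteq V \setminus C^\star$.

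For the running time, write $T(k,q) \le c_q^k \cdot n^{\cO(1)}$. The recurrence
\[
 T(k,q) \;\le\; \sum_{j=0}^{\lfloor k/q \rfloor} \binom{k}{j} \, T(k-j, q-1)
\]
combines with the binary-entropy bound $\sum_{j=0}^{\lfloor \alpha k\rfloor}\binom{k}{j} \le 2^{H(\alpha) k}$ (for $\alpha \le 1/2$), together with the observation that whenever $c_{q-1} \le q-1$ the quantity $\binom{k}{j} c_{q-1}^{-j}$ is maximized on $[0, \lfloor k/q \rfloor]$ at the right endpoint, to yield
\[
 c_q \;\le\; 2^{H(1/q)} \cdot c_{q-1}^{(q-1)/q}.
\]
Starting from $c_2 = 1$ gives $c_3 \le 2^{H(1/3)} = 3/\sqrt[3]{4} \approx 1.8899$, which just barely meets the target $3 - 1.11 = 1.89$.

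The main obstacle is certifying $c_q \le q - 1.11$ for every $q \ge 3$, since the bound is essentially tight at $q=3$. I plan to maintain the stronger invariant that the gap $\delta_q := q - c_q$ is non-decreasing in $q$; direct computation gives $\delta_3 \approx 1.11$, $\delta_4 \approx 1.17$, $\delta_5 \approx 1.21$, $\delta_6 \approx 1.24$, and so on. Monotonicity is equivalent to $c_q - c_{q-1} \le 1$, which can be checked numerically for small $q$ and handled asymptotically via the expansions $H(1/q)\ln 2 = (\ln q)/q + 1/q + O(1/q^2)$ and $(q-c_{q-1})^{(q-1)/q} = (q-c_{q-1})\bigl(1 - \ln(q-c_{q-1})/q + O(1/q^2)\bigr)$: the leading $(\ln q)/q$ contributions cancel, so the residual term keeps widening the gap. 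This closes the induction and delivers the claimed $\cO^*((q-1.11)^k)$ bound.
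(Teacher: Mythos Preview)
Your algorithm, the pigeonhole observation, the saturated set $A^\star$, and the correctness argument all match the paper's proof exactly, as does the $q=3$ base case via the entropy bound $\sum_{j\le k/3}\binom{k}{j}\le 2^{H(1/3)k}\le 1.89^k$.

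The divergence is in the inductive step of the running-time analysis. The paper simply extends the partial sum to the full range and applies the binomial theorem:
\[
\sum_{\ell=0}^{\lfloor k/q\rfloor}\binom{k}{\ell}(q-2.11)^{k-\ell}\;\le\;\sum_{\ell=0}^{k}\binom{k}{\ell}(q-2.11)^{k-\ell}\cdot 1^\ell\;=\;(q-1.11)^k,
\]
so $c_q\le c_{q-1}+1$ in one line. This is precisely the inequality $c_q-c_{q-1}\le 1$ (equivalently, $\delta_q$ non-decreasing) that you set out to prove at the end.

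Your sharper recursion $c_q\le 2^{H(1/q)}\,c_{q-1}^{(q-1)/q}$ is valid and gives better constants, but the price is that the final verification is left as a plan rather than a proof: ``checked numerically for small $q$'' together with ``handled asymptotically for large $q$'' does not close the argument without specifying the crossover value of $q$ and controlling the $O(1/q^2)$ error terms uniformly on the intermediate range. This can certainly be made rigorous, but the paper's binomial-theorem shortcut renders the effort unnecessary.
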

\begin{proof}
	Let $X \subseteq V(G)$ be the given vertex cover of $G$ of size $k$. We observe that if $G$ is $q$-colorable, then any valid $q$-coloring of $G$ can be extended from a valid $q$-coloring of $G[X]$. We know that in any $q$-coloring $\gamma \colon V \to [q]$ there is a color class that contains at most $\lfloor k/q \rfloor$ vertices in $X$. The algorithm now works as follows. We enumerate all sets $S \subseteq X$ of size at most $\lfloor k/q \rfloor$ and check whether they are independent. If so, let $S'$ denote the set consisting of $S$ together with all vertices in $V \setminus X$ that do not have a neighbor in $S$. Note that $G[S']$ is independent. We then recurse on the instance $G - S'$ with $q$ decreased by one (and the size of the modulator decreased by $|S|$). Once $q = 2$, we check whether the remaining graph is $2$-colorable (or equivalently, bipartite) in linear time.  
	
	We now compute the exponential dependence of the runtime by induction on $q$. As base cases we consider $q \in \{1,2,3\}$. The cases $q = 1$ and $q = 2$ are trivial, since the problem can be solved in polynomial time. For $q = 3$, the number of generated subproblems is bounded by
	$\sum_{\ell = 0}^{\lfloor k/3 \rfloor} {k \choose \ell}$,
which is at most $2^{H(1/3)k}$, where $H(x) = -x \log_2(x) - (1-x) \log_2(1-x)$ is the binary entropy \cite[page 427]{FG06}. Since $H(1/3) \le 0.9183$, the algorithm generates at most $2^{0.9183k} \le 1.89^k$ subproblems, all of which can be solved in polynomial time. For the induction step, let $q > 3$ and assume for the induction hypothesis that for $(q-1)$, the exponential dependence of the running time is upper bounded by $(q-1-1.11)^k$. Since the algorithm enumerates all subsets of $X$ of size $\ell$ for each $\ell \in [\lfloor k/q \rfloor]$, and the size of the parameter decreases by $\ell$ in each call, using the induction hypothesis we find that the exponential term in the running time is upper bounded by 
\begin{align*}
	\sum_{\ell = 0}^{\lfloor k/q \rfloor} {k \choose \ell}(q-2.11)^{k - \ell} \le \sum_{\ell = 0}^k {k \choose \ell}(q-2.11)^{k - \ell} \cdot 1^{\ell} = (q - 2.11 + 1)^k = (q - 1.11)^k,
\end{align*}
since $\sum_{i = 0}^n {n \choose i} \cdot a^i \cdot b^{n-i} = (a+b)^n$ by the Binomial Theorem.	
	
	We now argue the correctness of the algorithm, again by induction on $q$. The base cases, $q=1$ and $q=2$ are again trivially correct. For the induction step, consider $q > 2$ and assume for the induction hypothesis that the recursive calls to solve \textProb{$(q-1)$-Coloring} are correct. Suppose $G$ has a $q$-coloring $\gamma$ and let $T \subseteq V(G)$ denote the color class with the fewest vertices from $X$. Then, $|T \cap X| \le k/q$, so the algorithm guesses the set $S = T \cap X$. Since the corresponding set $S'$ contains all vertices in $G - X$ that do not have a neighbor in $S$ and $\gamma$ is a proper coloring, we can conclude that $S' \supseteq T$. Hence, $G - S'$ is a subgraph of the $(q-1)$-colorable graph induced by the other color classes of $\gamma$ which the algorithm detects correctly by the induction hypothesis. Conversely, any $(q-1)$-coloring for $G - S'$ can be lifted to a $q$-coloring of $G$ by giving all vertices in the independent set $S'$ the same, new, color.
\end{proof}

\subsection{Small No-Certificates}\label{secUBNoCert}
In earlier work \cite{JK13}, Jansen and Kratsch studied the kernelizability of
\textProb{$q$-Coloring} and established a generic method to prove the existence
of polynomial kernels for several parameterizations of \textProb{$q$-Coloring}.
We now show that we can use their method to prove the existence of
$\cO^*((q-\varepsilon)^k)$ time algorithms, for some $\varepsilon > 0$, for
several graph classes $\cF+kv$ as well.
\par 
We first introduce the necessary terminology. 
Let $(G, \Lambda)$ be an instance of \textProb{$q$-List-Coloring}. We call $(G',
\Lambda')$ a \emph{subinstance} of $(G, \Lambda)$, if $G'$ is an induced
subgraph of $G$ and $\Lambda(v) = \Lambda'(v)$ for all $v \in V(G')$.
\begin{definition}[$g(q)$-size \textsc{No}-certificates]
	Let $g \colon \bN \to \bN$ be a function. A graph class $\cF$ is said to have
	\emph{$g(q)$-size \textsc{No}-certificates} for \textProb{$q$-List-Coloring} if
	for all \textsc{No}-instances $(G, \Lambda)$ of \textProb{$q$-List-Coloring}
	with $G \in \cF$ there is a \textsc{No}-subinstance $(G', \Lambda')$ on at most
	$g(q)$ vertices.
\end{definition}

\begin{theorem}\label{thmAlgQListColNoCert}
	Let $\cF$ be a graph class with $g(q)$-size \textsc{No}-certificates
	for \textProb{$q$-List-Coloring}. Then, there is an $\varepsilon > 0$, such
	that \textProb{$q$-List-Coloring} (and hence, \textProb{$q$-Coloring}) on $\cF
	+ kv$ graphs can be solved in time $\cO^*((q-\varepsilon)^k)$ \emph{given} a modulator to $\cF$ of size at most $k$. 
	In particular, the algorithm runs in time $\cO^*\left(\sqrt[g(q)\cdot
	q]{q^{g(q) \cdot q} - 1}^k\right)$, where the degree of the hidden polynomial depends on $g(q)$.
\end{theorem}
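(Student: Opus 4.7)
The plan is to design a branching algorithm that, given an instance $(G,\Lambda,X)$ with $|X|=k$, repeatedly identifies a small set $N\subseteq X$ together with a colouring $\alpha_N\colon N\to[q]$ that provably cannot be extended to a valid list-colouring of $G$, and branches over the $q^{|N|}-1$ remaining assignments to $N$, recursing on each with the modulator shrunk by $|N|$.

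The core structural observation is the following. Suppose $\alpha_X\colon X\to[q]$ is a proper colouring of $G[X]$ that does \emph{not} extend to a proper list-colouring of $G$. Then $(G-X,\Lambda_{\alpha_X})$, where $\Lambda_{\alpha_X}(v)=\Lambda(v)\setminus\alpha_X(N(v)\cap X)$, is a \textsc{No}-instance on a graph in $\cF$; by hypothesis there is $Y\subseteq V\setminus X$ with $|Y|\le g(q)$ such that $(G[Y],\Lambda_{\alpha_X})$ is \textsc{No}. For each $v\in Y$ we have $\Lambda(v)\subseteq\alpha_X(N(v)\cap X)$, so we may pick for every $v\in Y$ and every $c\in\Lambda(v)$ one $X$-neighbour $u_{v,c}$ of $v$ with $\alpha_X(u_{v,c})=c$. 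Their union is a set $N\subseteq X$ with $|N|\le\sum_{v\in Y}|\Lambda(v)|\le g(q)\cdot q$, and $\alpha_X|_N$ alone already empties the list of every $v\in Y$. Hence every failing $\alpha_X$ is witnessed by a \emph{minimal certificate} $(Y,N,\alpha_N)$ with $|Y|\le g(q)$ and $|N|\le g(q)\cdot q$, and the existence of such a certificate can be tested in time $n^{\cO(g(q)\cdot q)}$ by enumerating all candidate triples and, for each, verifying in polynomial time that $\alpha_N$ empties every list in $Y$.

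The algorithm is then: if a minimal certificate $(Y,N,\alpha_N)$ is found, branch on the $q^{|N|}-1$ colourings of $N$ different from $\alpha_N$, fix each in turn, update the lists of $V\setminus X$, and recurse on the instance with modulator $X\setminus N$ of size $k-|N|$. If no certificate is found, the structural observation implies that \emph{every} proper list-colouring of $G[X]$ extends to a proper list-colouring of $G$, so it suffices to decide whether $G[X]$ admits any proper list-colouring; this can be done in time $\cO^*(2^k)$ by an inclusion--exclusion approach \`a la Bj\"orklund--Husfeldt--Koivisto (which extends to the list variant), a bound that is dominated by the main branching term for $q\ge 3$ (while $q=2$ is polynomial anyway).

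The recurrence $T(k)\le(q^{|N|}-1)\cdot T(k-|N|)+n^{\cO(1)}$ with $1\le|N|\le g(q)\cdot q$ then solves to $T(k)=\cO^*\!\bigl((q^{g(q)\cdot q}-1)^{k/(g(q)\cdot q)}\bigr)$, once we verify that $f(t)=(q^t-1)^{1/t}$ is strictly increasing on $t\ge 1$ so that the worst branching-instance is $|N|=g(q)\cdot q$; this follows from a short computation showing $(\ln f)'(t)>0$. The resulting base is strictly less than $q$ since $q^{g(q)\cdot q}-1<q^{g(q)\cdot q}$, yielding the claimed $\cO^*((q-\varepsilon)^k)$ bound. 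The main conceptual obstacle is the minimality reduction shrinking the full neighbourhood $N(Y)\cap X$ (which could have size $\Theta(k)$) down to only $g(q)\cdot q$ \emph{relevant} vertices, since this is precisely what makes the base of the exponent strictly less than $q$; without it one obtains only the trivial $\cO^*(q^k)$ bound.
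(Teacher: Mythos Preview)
Your overall strategy---find a small set $N\subseteq X$ and a colouring $\alpha_N$ that provably cannot extend, branch on the remaining $q^{|N|}-1$ assignments, and fall back to solving $G[X]$ when no such witness exists---is exactly the paper's approach, and your runtime analysis (including the monotonicity of $(q^t-1)^{1/t}$) is fine. However, there is a genuine gap in your structural observation.

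You claim that if $(G[Y],\Lambda_{\alpha_X})$ is a \textsc{No}-subinstance then $\Lambda(v)\subseteq\alpha_X(N(v)\cap X)$ for every $v\in Y$, i.e.\ that $\alpha_X$ \emph{empties} every list in $Y$. This is false: a \textsc{No}-instance of \textProb{$q$-List-Coloring} need not have any empty list. For a concrete failure of your algorithm, take $q=3$, let $G-X$ be a triangle on $\{a,b,c\}$ with $\Lambda(a)=\Lambda(b)=\Lambda(c)=\{1,2\}$, and let $X=\{x\}$ with $x$ adjacent to all three and $\Lambda(x)=\{3\}$. The unique colouring $\alpha_X(x)=3$ leaves the triangle's lists unchanged at $\{1,2\}$, which is still a \textsc{No}-instance, yet no list is emptied. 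Your search finds no certificate $(Y,N,\alpha_N)$ (no $\alpha_N$ on $\{x\}$ empties any list), so you conclude that every proper colouring of $G[X]$ extends and incorrectly output \textsc{Yes}.

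The fix, which is what the paper does, is to precompute the finite set $\zeta$ of all \textsc{No}-instances $(H,\Lambda_H)$ on at most $g(q)$ vertices, and then search for $Y\subseteq V\setminus X$ together with an isomorphism $\varphi\colon G[Y]\to H$ and vertices in $X$ (the paper groups them as $X_1,\ldots,X_q$, one set per colour) such that removing the designated colours from neighbours' lists reduces each $\Lambda(v)$ to a \emph{subset} of $\Lambda_H(\varphi(v))$, not to $\emptyset$. The bound $|N|\le g(q)\cdot q$ still holds, since for each $v\in Y$ one $X$-neighbour per colour in $\Lambda(v)\setminus\Lambda_H(\varphi(v))$ suffices. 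With this corrected notion of certificate your contrapositive (``no certificate $\Rightarrow$ every proper list-colouring of $G[X]$ extends'') becomes valid, and the rest of your argument goes through and coincides with the paper's proof.
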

\begin{proof}
	Let $G \in \cF + kv$ with vertex modulator $X$, such that $\cF$ has
	$g(q)$-size \textsc{No}-certificates for \textProb{$q$-List-Coloring}. 
	The idea of the algorithm is to enumerate partial colorings of $X$, except 
	some colorings for which it is clear that they cannot be extended to a proper
	coloring of the entire instance. The latter can occur as follows: After
	choosing a coloring for some vertices of $X$ and removing the chosen colors
	from the lists of their neighbors, a \textsc{No}-subinstance appears in the
	graph $G - X$. If the minimal \textsc{No}-subinstances have constant size, then for any
	given instance, either \emph{all} proper colorings on $X$ can be extended onto
	$G - X$, or there is a way to find a constant-size set $X' \subseteq X$
	of vertices for which at least one of the $q^{|X'|}$ colorings would trigger a
	\textsc{No}-subinstance and can therefore be discarded. Branching on the
	remaining relevant colorings for $X'$ then gives a nontrivial running time.
	An outline is given in Algorithm \ref{algQListColNoCert}.
	\par
	\begin{algorithm}
		\SetKwInOut{Input}{Input}
		\SetKwInOut{Output}{Output}
		
		\Input{A graph $G \in \cF + kv$ with vertex modulator $X$ and $\Lambda \colon V
		\to 2^{[q]}$.}
		\Output{\textsc{Yes}, if $G$ is $q$-list-colorable, \textsc{No} otherwise.}
		\BlankLine
		Let $\zeta$ be the set of \textsc{No}-instances of
		\textProb{$q$-List-Coloring} for $\cF$ of size at most $g(q)$, which is computed once by
		complete enumeration\;\label{algQLNCline1} 
		\If{\label{algQLNCMainCond} there exist $(H, \Lambda_H) \in \zeta$, $G' \subseteq G - X$
		and $X_1,\ldots,X_q \subseteq X$ of size at most $g(q)$ each such that:
		\begin{enumerate}
		  \item $\exists$ isomorphism $\varphi \colon V(G') \to V(H)$
		  \item For all $c \in [q]$ and $v \in X_c$ we have $c \in \Lambda(v)$
		  \item $(\forall v \in V(G'))(\forall c \in \Lambda(v) \setminus
		  \Lambda_H(\varphi(v)))$ $\exists w \in X_c$ with $\{v, w\} \in E(G)$\label{pty:neighbor}
		\end{enumerate}
		}{
			\ForEach{\label{algQLNCEnum}\text{proper coloring} $\gamma \colon \cX \to [q]$ where $\cX =
			\bigcup_i X_i$ and $\forall v \in \cX$: $\gamma(v) \in \Lambda(v)$}{ 
			\If{\label{algQLNCEnumExcl}$(\forall c \in
			[q])(\forall v \in X_c):~ \gamma(v) = c$}{ Skip this coloring, it is not 
				extendible to $G - X$\;
				} \Else {
					Create a copy $(G'', \Lambda'')$ of $(G, \Lambda)$ and denote by $\cX''$
					the vertex set in $G''$ corresponding to $\cX$ in $G$\;
					For each vertex $v \in \cX''$ and each neighbor $w$ of $v$: Remove
				  $\gamma(v)$ from $\Lambda''(w)$\;
				  Recurse on $(G'' - \cX'', \Lambda'')$\; \label{algQLNCRecurse}
				  \If{the recursive call returns \textsc{Yes}}{
						\textbf{Return} \textsc{Yes} and terminate the algorithm\;
					}
				}
			}
			\textbf{Return} \textsc{No}\; }
		\Else{Decide whether $(G[X], \Lambda)$ is $q$-list-colorable and
		if so, return \textsc{Yes}\;}\label{algQLNCline7}
		\caption{\textProb{$q$-List-Coloring} for $\cF + kv$ graphs where $\cF$ has
		$g(q)$-size \textsc{No}-certificates.}
		\label{algQListColNoCert}
	\end{algorithm}
	The main condition (line \ref{algQLNCMainCond}) checks
	whether the input graph $G$ contains the graph of a minimal
	\textsc{No}-instance as an induced subgraph.
	If so, we look for a neighborhood of $V(G')$ in $X$ (the sets $X_1,\ldots,X_q$), 
	which can block the colors that are on the lists $\Lambda$
	but not on the lists of the minimal \textsc{No}-instance.
	If these conditions
	are satisfied, then we know that we can exclude the coloring on
	$X_1,\ldots,X_q$ which assigns each vertex $v \in X_c$ the color $c$
	(for all $c \in [q]$): This coloring induces a \textsc{No}-subinstance on $(G,
	\Lambda)$. It suffices to use sets $X_c$ of at most $g(q)$ vertices each. To induce the \textsc{No}-instance, in the worst case we need a different vertex in $X_c$ for each of the $g(q)$ vertices in $H$ that do not have $c$ on their list.
	Hence, as described from line \ref{algQLNCEnum} on, we enumerate all
	colorings $\gamma \colon \cX \to [q]$ (where $\cX = \bigcup_i X_i$) except the one we just
	identified as not being extendible to $G - X$. For each such
	$\gamma$, we make a copy of the current instance and `assign' each vertex
	$v$ corresponding to a vertex in $\cX$ the color $\gamma(v)$: We remove
	$\gamma(v)$ from the lists of its neighbors and then remove $v$ from the copy
	instance. In the worst case we therefore recurse on $q^{q\cdot g(q)} - 1$
	instances with the size of the vertex modulator decreased by $q\cdot g(q)$.
	If during a branch in the computation, the condition in line \ref{algQLNCMainCond} is not
	satisfied, then we know that there is no coloring on the modulator that cannot
	be extended to the vertices outside the modulator and hence it is sufficient to
	compute whether $G[X]$ is $q$-list-colorable using the standard $\cO^*(2^n)$
	algorithm for computing the chromatic number \cite{BHK09}.
	As soon as one branch returns \textsc{Yes}, we can terminate the algorithm, since we found a valid list
	coloring.
	\par
	\begin{claim}\label{claim:only:gx}
	If the condition of line~\ref{algQLNCMainCond} does not hold, then~$G$ is $q$-list-colorable if and only if~$G[X]$ is $q$-list-colorable.
	\end{claim}
	\begin{claimproof}
	The forward direction is trivial since any proper coloring of~$G$ yields a proper coloring of its induced subgraph~$G[X]$. To prove the reverse direction, we show that if the condition of line~\ref{algQLNCMainCond} fails, any proper $q$-list-coloring of~$G[X]$ can be extended to a proper $q$-list-coloring of the entire graph. 
	\par
	Suppose that~$\gamma \colon X \to [q]$ is a proper $q$-list-coloring of~$G[X]$. Define a $q$-list-coloring instance~$(G - X, \Lambda')$ on the graph~$G - X$, where for each vertex~$v \in V(G - X)$ the list of allowed colors is~$\Lambda'(v) \defeq \Lambda(v) \setminus \{ \gamma(u) \mid u \in N_G(v) \cap X \}$. If~$(G - X, \Lambda')$ has a proper $q$-list-coloring~$\gamma'$, then we can obtain a proper $q$-list-coloring for~$G$ by following~$\gamma$ on the vertices in~$X$ and~$\gamma'$ on the vertices outside~$X$. The fact that the colors for vertices in~$X$ are removed from the $\Lambda'$-lists of their neighbors ensures that the resulting coloring is proper, and since each list of~$\Lambda'$ is a subset of the corresponding list in~$\Lambda$, the coloring satisfies the list requirements. We therefore complete the proof by showing that~$(G-X, \Lambda')$ must be a \textsc{Yes}-instance. Assume for a contradiction that~$(G-X, \Lambda')$ has answer \textsc{No}. Since~$G - X \in \mathcal{F}$, which has~$g(q)$-size \textsc{No}-certificates, there is an induced subinstance~$(G', \Lambda'')$ of~$(G-X, \Lambda')$ on at most~$g(q)$ vertices, where~$G'$ is an induced subgraph of~$G - X$ and therefore of~$G$. Since~$(G', \Lambda'')$ is a \textsc{No}-instance on at most~$g(q)$ vertices, the instance~$(H \defeq G', \Lambda_H \defeq \Lambda'')$ is contained in the set of enumerated small \textsc{No}-instances. For each~$v \in V(G')$, for each color~$c$ that belongs to~$\Lambda(v)$ but not to~$\Lambda'(v) = \Lambda''(v)$ we have~$\gamma(u) = c$ for some~$u \in N_G(v) \cap X$, by definition of~$\Lambda'$. Initialize~$X_1, \ldots, X_q$ as empty vertex sets. For each~$v \in V(G')$ and color~$c \in \Lambda(v) \setminus \Lambda''(v)$, add such a vertex~$u$ to~$X_c$. Since~$\gamma$ satisfies the list constraints, for each vertex~$v \in X_c$ with~$c \in [q]$ we have~$c \in \Lambda(v)$. Hence these structures satisfy the conditions of line~\ref{algQLNCMainCond}; a contradiction.
\end{claimproof}
	\par
	\begin{claim}\label{claim:skip:one:coloring}
	If the condition of line~\ref{algQLNCEnumExcl} holds, then the coloring~$\gamma$ cannot be extended to a proper $q$-list-coloring of~$G$.
	\end{claim}
	\begin{claimproof}
	To extend the coloring~$\gamma$ to the entire graph~$G$, each vertex~$v$ of~$G-X$ has to receive a color of~$\Lambda(v) \setminus \{ \gamma(u) \mid u \in N_G(v) \cap X\}$, since the color of~$v$ must differ from that of its neighbors. For each vertex~$v$ in the subgraph~$G'$, for each color~$c$ in~$\Lambda(v) \setminus \Lambda_H(\phi(v))$ there is a neighbor of~$v$ in~$X_c$ (by condition~\ref{pty:neighbor} of line~\ref{algQLNCMainCond}) that is colored~$c$ (by line~\ref{algQLNCEnumExcl}). Hence the colors available for~$v$ in an extension form a subset of~$\Lambda_H(\phi(v))$. But since~$G'$ is isomorphic to~$H$, and~$(H, \Lambda_H)$ is a \textsc{No}-instance, no such extension is possible as it would yield a proper $q$-list-coloring of~$(H, \Lambda_H)$.
\end{claimproof}
	Using these claims we prove correctness by induction on the nesting depth of recursive calls in which the condition of line~\ref{algQLNCMainCond} is satisfied. If line~\ref{algQLNCMainCond} is not satisfied (which includes the base case of the induction), then the algorithm is correct by Claim~\ref{claim:only:gx} and the fact that we invoke a correct algorithm in line \ref{algQLNCline7} as a subroutine~\cite{BHK09}. Now, suppose that the condition of line~\ref{algQLNCMainCond} is satisfied, and assume by the induction hypothesis that the recursive calls (line \ref{algQLNCRecurse}) are correct. Let $(G,
	\Lambda)$ with modulator $X$ be the current instance. We recurse on each possible proper $q$-list-coloring of the set $\cX$, except
	the one described in the condition in line \ref{algQLNCEnumExcl} for which Claim~\ref{claim:skip:one:coloring} shows it cannot be extended to a proper $q$-list-coloring. If~$(G,\Lambda)$ has a proper $q$-list-coloring~$\gamma$, then in the branch where we correctly guess the restriction of~$\gamma$ onto the vertices in~$\cX$ we find a \textsc{Yes}-answer: the restriction of~$\gamma$ on~$G'' - \cX''$ is a proper $q$-list-coloring of~$(G'' - \cX, \Lambda'')$ since the colors we removed from the lists were not used on~$G'' - \cX''$ (they were used on their neighbors in~$\cX''$). Conversely, if some recursive call yields a \textsc{Yes}-answer, then since we restricted the lists before going into recursion, we can extend a proper $q$-list-coloring on the smaller instance with the coloring~$\gamma$ on~$\cX$ to obtain a proper $q$-list-coloring of~$(G,\Lambda)$.
	\par
	We now analyze the runtime. Since $q$ is a constant, $g(q)$ is constant as well
	and computing the set $\zeta$ in line 1 can be done in
	constant time. Using the same argument we observe that the condition in line
	\ref{algQLNCMainCond} checks a polynomial number of options: The size of $\zeta$ and the size of its
	elements are constant and hence there is a polynomial number (at most~$|\zeta| \cdot n^{g(q)}$) of subgraphs of
	$G$ to consider. Since $t \le g(q)$, we can enumerate all isomorphisms and all sets
	$X_1,\ldots,X_q$ with an additional polynomial overhead. Hence the work in each iteration, excluding the recursive calls and line~\ref{algQLNCline7}, is polynomial.
	\par
	Line \ref{algQLNCline7} can be done in time~$\cO^*(2^{k + q})$, which is~$\cO^*(2^k)$ for constant~$q$, using the~$\cO^*(2^n)$ algorithm for \textProb{Chromatic Number}~\cite{BHK09} and the following classic reduction from $q$-list-coloring to $q$-coloring. Instance~$(G[X],\Lambda)$ has a proper $q$-list-coloring if and only if the following graph is $q$-colorable: starting from~$G[X]$, add a $q$-clique whose vertices represent the~$q$ colors, and edges between every~$v \in X$ and the clique-vertices whose colors do not appear on~$\Lambda(v)$.
	\par
	Using these facts we bound the total runtime. In the worst case we branch on $q^{q\cdot g(q)} -
	1$ instances in which the size of the modulator decreased by $q \cdot g(q)$. By standard techniques~\cite[Proposition 8.1]{Niedermeier06}, this branching vector can be shown to generate a search tree with~$\cO(\sqrt[g(q)\cdot	q]{q^{g(q) \cdot q} - 1}^k)$ nodes. If the work at each node of the tree is polynomial, we therefore get a total runtime bound matching the theorem statement. If we do not execute line~\ref{algQLNCline7}, then indeed a single iteration takes polynomial time. If line~\ref{algQLNCline7} is executed, then we spend~$\cO^*(2^k)$ time on the iteration. However, in that case we do not recurse further, so the time spent solving the problem on~$G[X]$ can be discounted against the fact that we do not explore a search tree of size~$\sqrt[g(q)\cdot	q]{q^{g(q) \cdot q} - 1}^k > 2^k$ for~$q \geq 3$. The time bound follows.
	\par
	This concludes the proof of Theorem \ref{thmAlgQListColNoCert}, noting that we can 
	apply any algorithm for \textProb{$q$-List-Coloring} to solve an instance of 
	\textProb{$q$-Coloring} by giving each vertex in a given instance of 
	\textProb{$q$-Coloring} a full list.
\end{proof}
In the light of \cite[Lemmas 2-4]{JK13} we can apply Theorem \ref{thmAlgQListColNoCert}
to a number of graph classes.
\begin{corollary}[of Thm. \ref{thmAlgQListColNoCert} and 
Cor. 1 and 2 and Lemmas 2, 3 and 4 in 
\cite{JK13}]\label{corAlgQListColNoCert}
	There is an $\varepsilon > 0$, such that the \textProb{$q$-Coloring} and \textProb{$q$-List-Coloring} problems on $\cF+kv$ graphs can be solved in time $\cO^*((q-\varepsilon)^k)$ \emph{given} a modulator to $\cF$ of size $k$, where $\cF$ is one of the following classes: \textsc{Independent, $\bigcup$Split, $\bigcup$Cochordal} and \textsc{Cograph}.
\end{corollary}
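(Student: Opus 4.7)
The proof plan is essentially a reduction to Theorem \ref{thmAlgQListColNoCert}: once we know that each of the four listed classes admits $g(q)$-size \textsc{No}-certificates for \textProb{$q$-List-Coloring}, the runtime bound $\cO^*((q-\varepsilon)^k)$ follows immediately by plugging the corresponding $g(q)$ into the theorem. The resulting $\varepsilon$ is $q - \sqrt[g(q) \cdot q]{q^{g(q) \cdot q} - 1} > 0$, so the only task is to verify the \textsc{No}-certificate property for each class.

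I would handle the classes one at a time, each time invoking the relevant result from \cite{JK13}. For \textsc{Independent}, the argument is almost immediate: a list-coloring instance on an edgeless graph is a \textsc{No}-instance if and only if some vertex has an empty list, which yields a \textsc{No}-certificate on a single vertex (so $g(q) = 1$). For $\bigcup$\textsc{Split} and $\bigcup$\textsc{Cochordal}, I would appeal to Lemmas~2 and~3 of \cite{JK13}, which give \textsc{No}-certificates of size bounded by a function of $q$ (in fact polynomial in $q$) by exploiting the structure of a single split/cochordal component: since the instance is a disjoint union, any \textsc{No}-answer is witnessed by a single component, and inside a component one can isolate a bounded number of vertices and lists that already obstruct a proper list-coloring. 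For \textsc{Cograph} I would cite Lemma~4 of \cite{JK13}, which likewise furnishes a \textsc{No}-certificate of size depending only on $q$, together with Corollaries~1 and~2 of \cite{JK13} to cover the remaining glue needed to feed the bound into our framework.

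Having established $g(q)$-size \textsc{No}-certificates for each of the four classes, I would conclude the proof in a single sentence: apply Theorem \ref{thmAlgQListColNoCert} with the appropriate $g(q)$ to each class $\cF \in \{\textsc{Independent}, \bigcup\textsc{Split}, \bigcup\textsc{Cochordal}, \textsc{Cograph}\}$ to obtain the promised $\cO^*((q-\varepsilon)^k)$ algorithm for \textProb{$q$-(List-)Coloring} on $\cF + kv$ graphs, where $\varepsilon > 0$ depends on $q$ and the chosen $\cF$.

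The only genuine obstacle, namely bounding the size of minimal \textsc{No}-certificates for list-coloring in each class, has already been resolved in \cite{JK13}; hence, for this corollary, no new combinatorial work is required beyond transcribing the interface between their certificate bounds and our Theorem \ref{thmAlgQListColNoCert}. I would therefore keep the proof very short, essentially of the form: ``By [JK13, Lemmas 2--4 and Corollaries 1--2], each listed class $\cF$ has $g(q)$-size \textsc{No}-certificates for \textProb{$q$-List-Coloring}; the claim then follows from Theorem \ref{thmAlgQListColNoCert}.''
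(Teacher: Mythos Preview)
Your proposal is correct and matches the paper's approach exactly: the paper gives no separate proof for this corollary, simply stating it as an immediate consequence of Theorem~\ref{thmAlgQListColNoCert} combined with the \textsc{No}-certificate bounds from \cite[Lemmas~2--4, Corollaries~1--2]{JK13}. Your write-up is, if anything, more detailed than the paper's one-line justification.
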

\begin{remark*}
	Rather than on the maximum \emph{size} of any minimal \text{No}-instance of a
	graph class $\cF$, the runtime of the algorithm described in Theorem
	\ref{thmAlgQListColNoCert} depends on their maximum \emph{deficiency}, defined
	as $d(G, \Lambda) = \sum_{v \in V(G)}q - |\Lambda(v)|$ (as we need one vertex
	in the modulator for each color we want to block from the list of a vertex in
	the \textsc{No}-instance).
	We would like to note that the runtime of Algorithm \ref{algQListColNoCert} can be improved when
	analyzing the deficiency of the \textsc{No}-instances constructed in the proofs
	of \cite[Lemmas 2-4]{JK13}.
\end{remark*}

\subsection{Bounded Treedepth}\label{secUBTD}
We now show that if the $(q+1)$-colorable members of a hereditary graph class $\cF$ have
treedepth at most $t$, then $\cF$ has $q^t$-size \textsc{No}-certificates. For a detailed introduction to the parameter treedepth and its applications, we refer to \cite[Chapter 6]{NO12}.
\begin{definition}[Treedepth]
	Let $G$ be a connected graph. A \emph{treedepth decomposition} $\cT = (V(G),
	F)$ is a rooted tree on the vertex set of $G$ such that the following holds.
	For $v \in V(G)$, let $\cA_v$ denote the set of ancestors of $v$ in $\cT$. Then, for each edge
	$\{v, w\} \in E(G)$, either $v \in \cA_w$ or $w \in \cA_v$. \par
	The \emph{depth} of $\cT$ is the number of vertices on a longest path from
	the root to a leaf. The \emph{treedepth} of a connected graph is the minimum depth of
	all its treedepth decompositions. The treedepth of a disconnected graph is
	the maximum treedepth of its connected components.
\end{definition}
The main result of this section is the following.
\begin{lemma}\label{lemQLColUBTDepth}
	Let $\cF$ be a hereditary graph class whose $(q+1)$-colorable members have
	treedepth at most $t$. Then, $\cF$ has $q^t$-size \textsc{No}-certificates for
	\textProb{$q$-List-Coloring}.
\end{lemma}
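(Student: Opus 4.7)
I would first pass to a sub-instance to which the treedepth hypothesis becomes applicable, and then induct on the treedepth. Concretely, starting from $(G, \Lambda)$, iteratively delete vertices from $G$ while the remaining induced sub-instance is still a No-instance of $q$-list-coloring, and let $(G^*, \Lambda^*)$ denote the resulting minimal No-sub-instance. For every $v \in V(G^*)$, minimality yields a proper $q$-list-coloring $\gamma$ of $(G^* - v, \Lambda^*)$, and assigning the fresh color $q+1$ to $v$ extends $\gamma$ to a proper $(q+1)$-coloring of $G^*$. Because $\cF$ is hereditary we have $G^* \in \cF$, and the hypothesis of the lemma gives that $G^*$ has treedepth at most $t$; it therefore suffices to find a No-sub-instance of $(G^*, \Lambda^*)$ (which is automatically a No-sub-instance of $(G, \Lambda)$) on at most $q^t$ vertices.

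Next I would prove by induction on $t$ the general statement that any No-instance $(H, \mu)$ of \textProb{$q$-List-Coloring} on a graph of treedepth at most $t$ admits a No-sub-instance on at most $q^t$ vertices. The base case $t = 1$ forces $H$ to be edgeless, so some vertex has empty list and it alone certifies No. For the inductive step, I would restrict attention to a connected component of $H$ witnessing the No-answer and choose a treedepth decomposition rooted at a vertex $r$ so that each component of $H - r$ has treedepth at most $t-1$. For each $c \in \mu(r)$, let $\mu_c$ arise from $\mu$ by deleting $c$ from the lists of the neighbors of $r$; then $(H - r, \mu_c)$ is itself a No-instance, since otherwise extending by $r \mapsto c$ would properly list-color $(H, \mu)$. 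The inductive hypothesis applied to $(H-r,\mu_c)$ (whose treedepth is at most $t-1$) produces a No-sub-instance $(H_c, \mu_c|_{H_c})$ on at most $q^{t-1}$ vertices.

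I would then combine these pieces by setting $H' = H[\{r\} \cup \bigcup_{c \in \mu(r)} V(H_c)]$ with lists inherited from $\mu$. Any candidate list-coloring of $(H', \mu|_{H'})$ assigns $r$ some color $c \in \mu(r)$, and its restriction to $V(H_c)$ respects $\mu_c$ since the color $c$ is forbidden at the neighbors of $r$; this contradicts that $(H_c, \mu_c|_{H_c})$ is No. Hence $(H', \mu|_{H'})$ is the desired No-sub-instance, with $|V(H')| \leq 1 + |\mu(r)| \cdot q^{t-1}$.

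\textbf{Main obstacle.} The conceptual difficulty is bridging from an arbitrary No-instance $(G, \Lambda)$---which need not be $(q+1)$-colorable---to the class hypothesis, which only constrains the treedepth of $(q+1)$-colorable members of $\cF$; passing to a minimal No-sub-instance is what unlocks the hypothesis. A minor technical point is that the naive induction produces $1 + q \cdot q^{t-1} = q^t + 1$ rather than $q^t$; to land at the bound stated in the lemma, I would run the induction with the sharper invariant $\tfrac{q^t - 1}{q-1}$, which telescopes correctly through the recursion and is bounded by $q^t$ for $q \geq 2$.
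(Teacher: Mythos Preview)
Your proposal is correct and follows essentially the same approach as the paper: reduce to a $(q+1)$-colorable sub-instance so the treedepth bound applies, then induct on treedepth by branching over the root's list and combining the recursive certificates, with the same recurrence $h(t)\le q\cdot h(t-1)+1$ resolved to $\frac{q^t-1}{q-1}\le q^t$. The only cosmetic difference is how the $(q+1)$-colorable sub-instance is obtained---the paper repeatedly deletes an arbitrary vertex while the chromatic number exceeds $q+1$, whereas you pass to a minimal \textsc{No}-sub-instance and observe it is $(q+1)$-colorable---but these are equivalent in spirit and effect.
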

\begin{proof}
	Consider an arbitrary \textsc{No}-instance $(G, \Lambda)$ of
	\textProb{$q$-List-Coloring} for a graph $G \in \cF$. If $G$ is not
	$(q+1)$-colorable (ignoring the lists $\Lambda$), then remove an arbitrary
	vertex from $G$. Since this lowers the chromatic number by at most one, the
	resulting graph will still be a \textsc{No}-instance of \textProb{$q$-Coloring} and therefore of
	\textProb{$q$-List-Coloring}. Repeat this step until arriving at a subinstance
	$(G', \Lambda')$ that is $(q+1)$-colorable. By assumption, $G'$ has treedepth
	at most $t$. Fix an arbitrary treedepth decomposition for $G'$ of depth at most
	$t$. We use the decomposition to find a \textsc{No}-subinstance by a 
	recursive algorithm. Given a \textsc{No}-instance $(G,
	\Lambda)$ and a treedepth decomposition $\cT$ of $G$ of depth at most $t$, it
	marks a set $M \subseteq V(G)$ such that the subinstance induced by $M$ is
	still a \textsc{No}-instance and $|M| \le q^t$.
	\par
	If the treedepth decomposition has depth one, then mark a vertex with an
	empty list (which must exist if the answer is \textsc{No}). When the
	decomposition has depth $> 1$, then do the following. Let $\cT$ be a tree of
	the decomposition that represents a connected component $C$ that cannot be list
	colored. Let $r$ be its root. For each color $c \in \Lambda(r)$, create a list
	coloring instance~$(C - \{r\}, \Lambda_c)$ on a graph of treedepth $t-1$ as follows. The graph is $C -
	\{r\}$ and its decomposition consists of $\cT$ minus its root (which therefore
	splits into a forest), and the lists equal the old lists except that we remove
	$c$ from the lists of all of $r$'s neighbors. Observe that the subinstance has
	answer \textsc{No}, since otherwise the component $C$ has a proper coloring.
	Recursively call the algorithm on this smaller instance to get a set $M_c$ that
	preserves the fact that $(C - \{r\}, \Lambda_c)$ has answer \textsc{No}.
	After getting the answers from all the recursive calls, mark the vertices in
	the set $M$ containing the root $r$ together with the union of the sets $M_c$
	for all $c \in \Lambda(r)$.
	\par
	To bound the size of the set $M$, let $h(t)$ denote the maximum number
	of marked vertices in a treedepth decomposition of depth $t$. Clearly, $h(1) =
	1$. If $t > 1$, we recurse in at most $q$ ways on instances of treedepth $t-1$,
	hence the number of marked vertices is described by the recurrence $h(t) \le
	q\cdot h(t-1) + 1$ which resolves to $h(t) \le \frac{q^t - 1}{q-1}$ and hence
	$h(t) \le q^t$, as claimed.
	\par
	We now prove that the above described marking procedure preserves the
	\textsc{No}-answer of an instance of \textProb{$q$-List-Coloring}. We use
	induction on $t$, the depth of a treedepth decomposition $\cT$ (with root
	$r$) of the graph $G$ of a \textProb{$q$-List-Coloring} \textsc{No}-instance
	$(G, \Lambda)$. 
	The base case $t = 1$ is trivially correct: A graph has treedepth one if and 
	only if it is independent and since a graph is $q$-list-colorable if and only if its
	connected components are $q$-list-colorable, the only minimal
	\textsc{No}-instance of treedepth one is a single vertex with an empty list,
	which we marked in the procedure. Now suppose for the induction hypothesis that
	$t > 1$ and for all $t' < t$, the marking procedure is correct. 
	Consider a treedepth decomposition $\cT$ of a connected component $C$ of (a
	subgraph of) $G$ and the set $M$ of currently marked vertices. 
	Suppose for the sake of a contradiction that $(G[M], \Lambda_{\mid M})$ is a
	\textsc{Yes}-instance with proper list-coloring $\gamma \colon M \to [q]$. Let
	$C_{\gamma(r)}$ denote the connected component of $C - \{r\}$ we branched on
	for color $\gamma(r)$ and $M_{\gamma(r)}$ the set of marked vertices in
	$C_{\gamma(r)}$.	
	By the induction hypothesis (which applies since $C_{\gamma(r)}$ has treedepth
	at most $t-1$), we know that $(G[M_{\gamma(r)}], \Lambda_{\gamma(r)})$ is a \textsc{No}-instance of
	\textProb{$q$-List-Coloring}. But~$\gamma_{\mid M_{\gamma(r)}}$ is a valid solution for that instance if~$\gamma$ is a proper coloring: the color of~$r$ cannot appear on its neighbors in~$M_{\gamma(r)}$, and therefore~$\gamma_{\mid M_{\gamma(r)}}$ satisfies the list constraints of~$\Lambda_{\gamma(r)}$. This contradicts the fact that~$(g[M_{\gamma(r)}], \Lambda_{\gamma(r)})$ is a \textsc{No}-instance.
\end{proof}

To see the versatility of Lemma \ref{lemQLColUBTDepth}, observe that the
vertices of a $(q+1)$-colorable split graph can be partitioned into a clique of
size at most $(q+1)$ and an independent set, which makes it easy to see that
they have treedepth at most $q+2$. Since the treedepth of a disconnected graph
equals the maximum of the treedepth of its connected components, we then get a finite
($q^{q+2}$) bound on the size of minimal \textsc{No}-instances for
\textProb{$q$-List-Coloring} on $\bigcup\mbox{\textsc{Split}}$ graphs. An ad-hoc
argument was needed for this in earlier work \cite[Lemma 2]{JK13}, albeit
resulting in a better bound ($q + 4^q$).

\section{Lower Bounds}\label{secLB}
In this section we prove lower bounds for \textProb{$q$-Coloring} in the parameter hierarchy. 
Since in the following, the `$\cF+kv$'-notation is more convenient for the presentation of our results, we will mostly refer to graphs which have a vertex cover of size $k$ as $\mbox{\textsc{Independent}}+kv$ graphs and graphs that have a feedback vertex set of size $k$ as $\mbox{\textsc{Forest}}+kv$ graphs.

In Section \ref{secLBETHVC} we show that there is no universal constant $\theta$, such that \textProb{$q$-Coloring} on $\mbox{\textsc{Independent}} + kv$ graphs can be solved in time $\cO^*(\theta^k)$ for all fixed $q \in \cO(1)$, unless $\ETH$ fails. We generalize the lower bound modulo $\SETH$ for $\mbox{\textsc{Forest}}+kv$ graphs \cite{LMS11} to $\mbox{\textsc{Linear Forest}}+kv$ (and $\mbox{\textsc{Path}}+kv$) graphs in Section \ref{secLBSETHLinFor}. Note that by the constructions we give in their proofs, the lower bounds also hold in case a modulator of size $k$ to the respective graph class is given.

\subsection{No Universal Constant for Independent+kv graphs}\label{secLBETHVC} 
The following theorem shows that, unless $\ETH$ fails, the runtime of any
algorithm for \textProb{$q$-Coloring} parameterized by vertex cover
(equivalently, on $\mbox{\textsc{Independent}}+kv$ graphs), always has a term
depending on $q$ in the base of the exponent.

\begin{theorem}\label{thmNoUCQColETH}
	There is no (universal) constant $\theta$, such that for all fixed $q \in
	\cO(1)$, \textProb{$q$-Coloring} on $\mbox{\textsc{Independent}}+kv$ graphs can
	be solved in time $\cO^*(\theta^k)$, unless $\ETH$ fails.
\end{theorem}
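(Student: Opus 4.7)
The plan is to prove the contrapositive: assuming such a universal constant $\theta$ exists, I would derive, for every $\varepsilon > 0$, a $2^{\varepsilon n}$-time algorithm for \textProb{3-Coloring} on $n$-vertex graphs. Combined with the standard ETH-based lower bound of $2^{\Omega(n)}$ for \textProb{3-Coloring} (obtained by composing the Sparsification Lemma with the linear-size reduction from \textProb{3-SAT}), this would contradict ETH. The core technical step is a reduction that trades vertices for colors, transforming a \textProb{3-Coloring} instance on $n$ vertices into a \textProb{$q$-Coloring} instance whose vertex cover has size $\cO(n/\log q) + \cO(q)$.

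To build that reduction, given a \textProb{3-Coloring} instance $H$ on $n$ vertices and a block size $t$ to be tuned, I would set $q \defeq 3^t$ and partition $V(H)$ into $\lceil n/t \rceil$ blocks $B_1, \ldots, B_{\lceil n/t \rceil}$. The output graph $G$ consists of: (i) a \emph{palette clique} $P = \{p_1, \ldots, p_q\}$ which, in any proper $q$-coloring, uses each color exactly once so that, after a global relabeling, $p_c$ receives color $c$; (ii) a single \emph{block-vertex} $b_j$ for each block $B_j$, whose color in $[q]$ is interpreted via ternary expansion as a 3-coloring of $B_j$; (iii) palette edges $\{b_j, p_c\}$ for every $c$ whose ternary expansion encodes a monochromatic edge of $H[B_j]$, thereby forbidding those invalid colors on $b_j$; and (iv) a small constant-size \emph{cross-edge gadget} for each edge of $H$ between two different blocks $B_j, B_{j'}$, ruling out exactly those pairs of block-colors that would assign the two endpoints the same 3-color, and designed so that the gadget's internal vertices lie outside the intended vertex cover.

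For the analysis, I would take $X \defeq P \cup \{b_1, \ldots, b_{\lceil n/t \rceil}\}$ as the modulator and check that $X$ is indeed a vertex cover of $G$, giving $|X| \le q + \lceil n/t \rceil$. Feeding $(G, X)$ to the hypothetical algorithm yields running time
\begin{equation*}
	\theta^{|X|} \cdot n^{\cO(1)} \;\le\; \theta^{q + n/t + 1} \cdot n^{\cO(1)} \;=\; 2^{(\log_2 \theta)(q + n/t + 1)} \cdot n^{\cO(1)}.
\end{equation*}
For any $\varepsilon > 0$, I would choose $t$ large enough that $(\log_2 \theta)/t \le \varepsilon/2$; since $q = 3^t$ is then a constant depending only on $\varepsilon$ and $\theta$, the total running time is $2^{\varepsilon n}$ for all sufficiently large $n$. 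This yields a $2^{\varepsilon n}$-time algorithm for \textProb{3-Coloring} for every $\varepsilon > 0$, contradicting ETH.

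The hard part will be the cross-edge gadget in item~(iv). The naive single-vertex gadget --- a vertex $g$ adjacent to $b_j, b_{j'}$ and to all palette vertices except $p_c, p_{c'}$, which forces $g \in \{c, c'\}$ --- is symmetric and forbids the \emph{unordered} set $\{c, c'\}$ rather than the ordered pair $(c, c')$. This is too aggressive because ``bad'' pairs are inherently asymmetric, depending on the positions $i_a, i_b$ of the two endpoints of the cross-edge inside their respective blocks, so blocking the set $\{c, c'\}$ can rule out legal colorings. A workable remedy is a small asymmetric gadget --- for instance a pair $g_1, g_2$ with the edge $\{g_1, g_2\}$ and a ``spare'' color $o$ that no block-vertex is ever allowed to use (enforced by adding the palette-edge $\{b_j, p_o\}$ at every block-vertex), with palette-forced lists $\{c, o\}$ and $\{c', o\}$ respectively --- which blocks exactly $(c, c')$ but leaves $(c', c)$ untouched. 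A secondary and equally important check is that the total contribution of these gadget vertices to the vertex cover is negligible, which one can ensure either by arranging the gadget vertices as an independent set attached only to $X$, or by bounding their number through the Sparsification Lemma, so that the bound $|X| = \cO(n/\log q) + \cO(q)$ survives and the amplification above goes through.
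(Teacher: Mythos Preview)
Your high-level plan is sound and is essentially the same ``group and compress'' idea the paper uses (the paper reduces from \textProb{3-SAT} rather than \textProb{3-Coloring}, but that is immaterial). The gap is in the cross-edge gadget, and it is exactly the ``hard part'' you flagged but did not resolve.

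Your concrete proposal—two vertices $g_1,g_2$ with the edge $\{g_1,g_2\}$ and palette-forced lists $\{c,o\}$, $\{c',o\}$—does forbid precisely the ordered pair $(c,c')$, but the edge $\{g_1,g_2\}$ lies entirely outside $X = P \cup \{b_1,\ldots,b_{\lceil n/t\rceil}\}$, so $X$ is \emph{not} a vertex cover. Repairing this by adding one endpoint of each gadget edge to $X$ is fatal: a single cross-edge of $H$ produces $\Theta(q^2)$ bad ordered pairs, so even after sparsification to $|E(H)|=\cO(n)$ you add $\Theta(n q^2)$ vertices to $X$, and $\theta^{|X|}$ no longer decays with $q$. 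Your alternative, ``arrange the gadget vertices as an independent set attached only to $X$,'' is what you actually need, but you give no such gadget—and none exists in this form: a single vertex adjacent to $b_j,b_{j'}$ and to all palette vertices outside a two-element list $\{c,c'\}$ can only forbid the \emph{unordered} set $\{c,c'\}$, and independent gadget vertices cannot be correlated.

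The missing idea (and this is precisely what the paper does) is \emph{redundant block vertices on disjoint color ranges}. Double the palette to $2q$ colors; for each block add a second vertex $b'_j$ restricted to $[q{+}1,2q]$, and for every pair $(c,c')$ with $c'\neq c+q$ add a single vertex with list $\{c,c'\}$ adjacent to $b_j$ and $b'_j$. These ``equalizer'' vertices are pairwise nonadjacent and force $\mathrm{color}(b'_j)=\mathrm{color}(b_j)+q$. Now, to forbid the ordered pair $(c,c')$ on $(b_j,b_{j'})$, a \emph{single} vertex with list $\{c,\,c'{+}q\}$ adjacent to $b_j$ and $b'_{j'}$ does the job, and since $c\le q<c'{+}q$ the two list entries are always distinct. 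All equalizer and constraint vertices are independent and attached only to $X=P\cup\{b_j,b'_j\}$, so $|X|=2q+2\lceil n/t\rceil$ and your running-time calculation goes through unchanged. (The paper uses threefold redundancy because \textProb{3-SAT} clauses are ternary; for your \textProb{3-Coloring} reduction twofold suffices.)
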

\begin{proof}
	Assume we can solve \textProb{$q$-Coloring} on $\mbox{\textsc{Independent}}+kv$
	graphs in time $\cO^*(\theta^k)$. We will use this hypothetical algorithm to solve
	\textProb{3-SAT} in $\cO^*(2^{\varepsilon n})$ time for arbitrarily small
	$\varepsilon > 0$, contradicting $\ETH$. We present a way to reduce an instance
	$\varphi$ of \textProb{3-SAT} to an instance of \textProb{$3q$-List-Coloring}
	for $q$ an arbitrary power of $2$. The larger $q$ is, the smaller the vertex
	cover of the constructed graph will be. It will be useful to think of a color
	$c \in [q]$ ($q = 2^t$ for some $t \in \bN$) as a bitstring of length $t$, which naturally encodes a truth assignment
	to $t$ variables. The entire color range~$[3q]$ partitions into three consecutive blocks of~$q$ colors, so that the same truth assignment to~$t$ variables 
	can be encoded by three distinct colors~$c, c+q$, and~$c+2q$ for some~$c \in [q]$. 
	The reason for the threefold redundancy is that clauses in
	$\varphi$ have size three and will become clear in the course of the proof.
	\par
	Given an instance $\varphi$ of \textProb{3-SAT}, we create a graph $G_{3q}$
	and lists $\Lambda \colon V(G_{3q}) \to [3q]$ as follows. First, we add $\lceil
	n/\log q \rceil$ vertices $v_{1,i}$ (where $i \in [\lceil n/\log q \rceil]$) 
	to $V(G_{3q})$, whose colorings will correspond to the truth
	assignments of the variables~$x_1, \ldots, x_n$ in $\varphi$. We let $\Lambda(v_{1, i}) = [q]$ for all these vertices. In
	particular, the variable $x_i$ will be encoded by vertex $v_{1, \lceil i/\log q
	\rceil}$.
	We add two more layers of vertices $v_{2, i}, v_{3, i}$ (where 
	$i \in [\lceil n / \log q \rceil]$) to $G_{3q}$ whose lists will be $\Lambda(v_{2,
	i}) = [(q+1)..2q]$ and $\Lambda(v_{3, i}) = [(2q + 1)..3q]$, respectively (for
	all $i$). Throughout the proof, we denote the set of all these \emph{variable
	vertices} by $\cV = \bigcup_{i, j} v_{i, j}$, where $i \in [3]$ and $j \in
	[\lceil n/\log q \rceil]$.
	\par
	For each $i \in [2]$ and $j \in [\lceil n/\log q \rceil]$ we do the following.
	For each pair of colors $c \in [((i-1)q + 1)..(i\cdot q)]$ and $c' \in
	[(i\cdot q + 1)..((i+1)q)]$ such that $c + q \neq c'$, we add a vertex $u^{i,j}_{c,
	c'}$ with list $\Lambda(u^{i,j}_{c, c'}) = \{c, c'\}$ and make it adjacent to both
	$v_{i, j}$ and $v_{i+1, j}$.
	Note that this way, we add $\cO(q^2)$ and hence a constant number of vertices for each such
	$i$ and $j$. We denote the set of all vertices $u^{\cdot,\cdot}_{\cdot, \cdot}$ for all $i$
	and $j$ by $\cU$.
	\begin{claim}\label{obsNoUCVCDelta}
		Let $i \in [2]$ and $j \in [\lceil n/\log q \rceil]$. In any proper
		list-coloring of $G_{3q}$, the color $c \in [((i-1)q + 1)..(i\cdot q)]$
		appears on $v_{i, j}$ if and only if the color $c + q$ appears on $v_{i+1,
		j}$. If color~$c \in [((i-1)q + 1)..(i\cdot q)]$ appears on~$v_{i,j}$ and~$c' = q + c$ appears on~$v_{i+1,j}$, then all vertices~$u^{i,j}_{\cdot, \cdot}$ can be 
		assigned a color from their list that does not appear on a neighbor.
	\end{claim}
	\begin{claimproof}
		We first observe that the lists of $v_{i, j}$ and $v_{i+1, j}$ are
		$\Lambda(v_{i, j}) = [((i-1)q+1)..(i\cdot q)]$ and $\Lambda(v_{i+1, j}) =
		[(i\cdot q + 1)..(i+1)q]$, respectively. Suppose that $c$ appears on $v_{i,
		j}$. Then, for every color $c' \in [(i\cdot q + 1)..((i+1)q)]$ with $c' \neq c +
		q$ there is a neighbor $u^{i,j}_{c, c'}$ of $v_{i, j}$ with list $\Lambda(u^{i,j}_{c,
		c'}) = \{c, c'\}$. Since $c$ already appears on a neighbor of $u^{i,j}_{c, c'}$, 
		we know that in each proper coloring, $u^{i,j}_{c, c'}$ must be
		colored $c'$, blocking this color for its neighbor~$v_{i+1,j}$. As this prevents any color~$c' \neq c + q$ from appearing on~$v_{i+1,j}$, in any proper list-coloring that vertex is colored~$c+q$. 
		(A proof of the converse works the same way.)
 		\par 
		Now suppose~$c$ appears on~$v_{i,j}$ and~$c+q$ appears on~$v_{i,j}$. Then any vertex~$u^{i,j}_{c', c''}$ created by the process above has~$\{c', c''\} \neq \{c, c+q\}$ by construction. Hence~$u^{i,j}_{c',c''}$ can safely be assigned a color of~$\{c',c''\} \setminus \{c, c+q\}$, which does not appear on any of its neighbors.
	\end{claimproof}
	Claim \ref{obsNoUCVCDelta} shows that in any proper list-coloring of~$\cV$, 
	there is a threefold redundancy: If color $c$
	appears on $v_{1, i}$, then color $c + q$ appears on $v_{2, i}$ and $c + 2q$
	appears on $v_{3, i}$. We associate a proper list-coloring of~$\cV$ with the truth assignment whose \textsc{True/False}
	assignment to the $i$-th block of $\log q$ consecutive
	variables follows the $1/0$-bit pattern in the least significant $\log q$ bits of the
	binary expansion of the color of vertex $v_{1, i}$. Conversely, given a truth assignment to~$x_1, \ldots, x_n$ we associate it to the coloring of~$\cV$ where 
	the color of vertex~$v_{1, i}$ is given by the number whose least significant~$\log q$ bits match the truth assignment to the $i$-th block of~$\log q$ variables, and any remaining bits are set to~$0$. The colors of~$v_{2,i}$ and~$v_{3,i}$ are~$q$ and~$2q$ higher than the color of~$v_{1,i}$.
	\par
	For each clause $C_j \in \varphi$ we will now add a
	number of \emph{clause vertices} to ensure that if
	$C_j$ is not satisfied by a given truth assignment of its variables, 
	then the corresponding coloring of
	the vertices $\cV$ cannot be extended to (at least) one of these clause vertices.
	\par Let $C_j \in \varphi$ be a clause with variables $x_{j_1}, x_{j_2}$, and
	$x_{j_3}$. Then, $v_{1, \lceil j_1/\log q \rceil}, v_{1, \lceil j_2/\log
	q\rceil}$, and $v_{1, \lceil j_3/\log q \rceil}$ denote the vertices whose
	colorings encode the truth assignments of the respective variables.
	In the following, let $j_i' = \lceil j_i /\log q \rceil$ for $i \in [3]$.
	Note that there is precisely one truth assignment of the variables $x_{j_1},
	x_{j_2}$, and $x_{j_3}$ that does not satisfy $C_j$. 
	Choose $\ell_1, \ell_2, \ell_3 \in \{0,1\}$ such that $\ell_i = 0$ if and only if the $i$-th
	variable in $C_j$ appears negated. 
	For~$i \in [3]$ let $F_i \subseteq [q]$ be those colors
	whose binary expansion differs from $\ell_i$ at the $(j_i \bmod (\log q))$-th least significant bit, and define~$F^{+q}_i \defeq \{q + c \mid c \in F_i\}$ and~$F^{+2q}_i \defeq \{2q + c \mid c \in F_i\}$. This implies that the truth assignment encoded by a proper
	list-coloring of~$\cV$ falsifies the $i$-th literal of $C_j$ if
	and only if it uses a color from $F_i$ on vertex $v_{1, j'_i}$. By Claim~\ref{obsNoUCVCDelta}, this happens if and only if it uses a color from~$F^{+q}_i$ on vertex~$v_{2, j'_i}$, which happens if and only if it uses a color of~$F^{+2q}_i$ on vertex~$v_{3,j'_i}$.
	Hence the assignment encoded by a proper list-coloring satisfies clause $C_j$ if
	and only if the colors appearing on $(v_{1, j_1'}, v_{2, j_2'}, v_{3, j_3'})$
	do not belong to the set $F_1 \times F^{+q}_2 \times F^{+2q}_3$. To encode the requirement
	that $C_j$ be satisfied into the graph $G_{3q}$, for each $(\gamma_1, \gamma_2,
	\gamma_3) \in F_1 \times F^{+q}_2 \times F^{+2q}_3$ we add a vertex $w_{\gamma_1,
	\gamma_2, \gamma_3}$ to $G_{3q}$ that is adjacent to $v_{1, j_1'}, v_{2,
	j_2'}$, and $v_{3, j_3'}$ and whose list is $\{\gamma_1, \gamma_2, \gamma_3\}$. 
	The threefold redundancy we incorporated ensures that the three colors in each forbidden triple are all distinct. Therefore, if one of the three neighbors of~$w_{\gamma_1, \gamma_2, \gamma_3}$ does not receive its forbidden color, then~$w_{\gamma_1, \gamma_2, \gamma_3}$ can properly receive that color. This would not hold if there could be duplicates among the forbidden colors.
	\begin{figure}
		\centering
		\includegraphics[width=.9\textwidth]{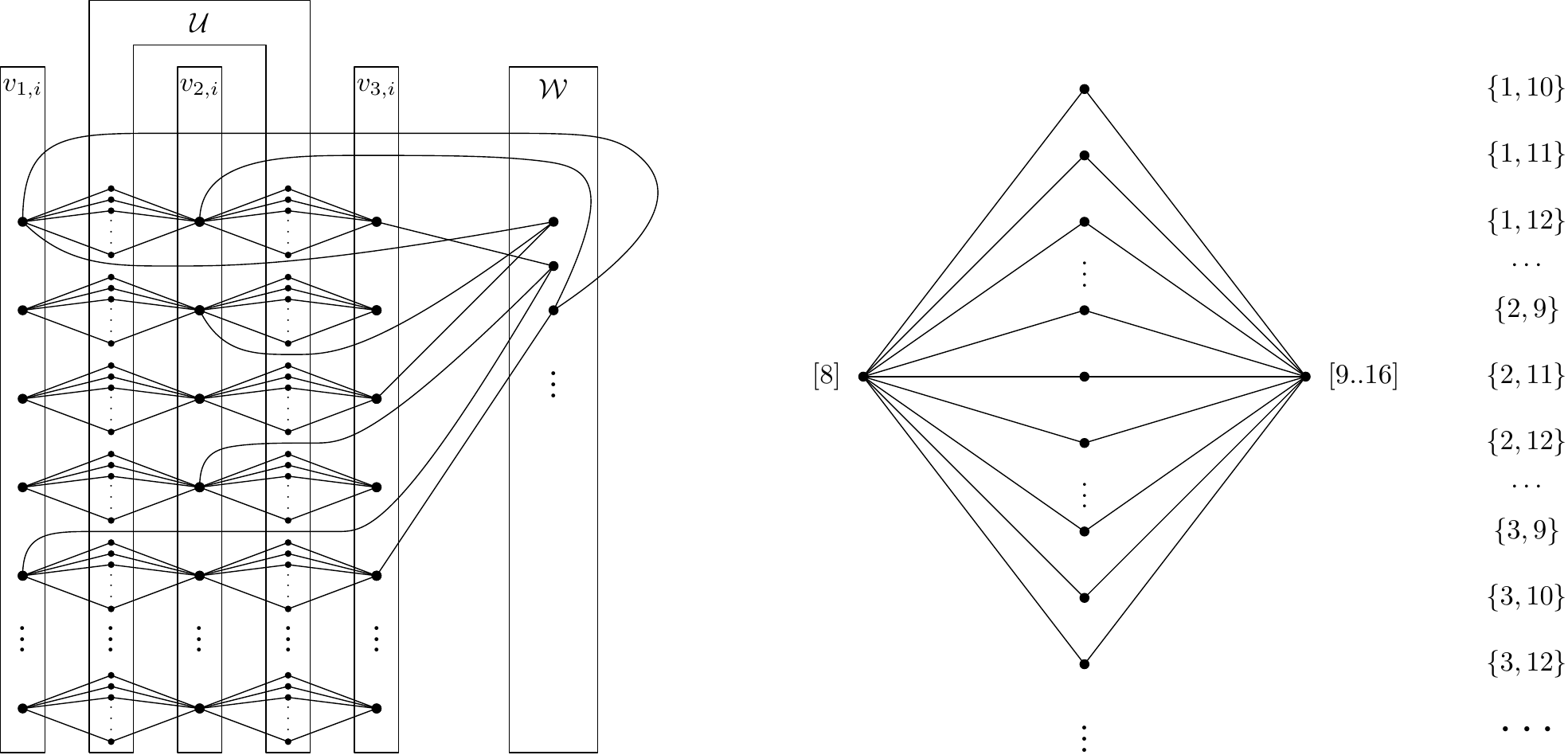}
		\caption{An illustration of the reduction given in the proof of Theorem
		\ref{thmNoUCQColETH}.
		On the left there is a schematic overview and on the right an example of a 
		subgraph induced by two vertices~$v_{1, j}$ and~$v_{2, j}$ together with the corresponding vertices in~$\cU$ for \textProb{24-List-Coloring} (where the lists of the vertices in the middle are displayed to their right).}
		\label{figNoUCQColETHRed}
	\end{figure}
	The reduction is finished by adding these vertices for each clause $C_j \in \varphi$. We denote the set of clause vertices by~$\cW$. For an illustration see Figure \ref{figNoUCQColETHRed}.
	\begin{claim}\label{propNoUCCorr}
		The formula $\varphi$ has a satisfying assignment, if and only if the graph
		$G_{3q}$ obtained via the above reduction is $3q$-list-colorable.
	\end{claim}
	\begin{claimproof}
Suppose $\varphi$ has a satisfying assignment $\psi \colon [n] \to \{0,1\}$. Let~$\gamma_{\psi}$ be the corresponding proper coloring of~$\cV$, as described above. 
	We argue that $\gamma_\psi$ can be extended to the vertices $\cW$ as well. 
	Let $C_j \in \varphi$ be a clause on variables $x_{j_1}, x_{j_2}$, and $x_{j_3}$
	and let $w_{\gamma_1, \gamma_2, \gamma_3} \in \cW$ be a vertex we introduced in
	the construction above for $C_j$. For $i \in [3]$, let $\gamma_\psi^i =
	\gamma_\psi(v_{i, \lceil j_i / \log q\rceil })$.
	
	Since $\gamma_\psi$ encodes a satisfying assignment, we know that there exists
	an $i^* \in [3]$, such that $\gamma_\psi^{i^*} \neq \gamma_{i^*}$ (since
	otherwise, $\psi$ is not a satisfying assignment to $\varphi$). Hence, the
	color $\gamma_{i^*}$ is not blocked from the list of vertex $w_{\gamma_1,
	\gamma_2, \gamma_3}$ which can then be properly colored. By Claim
	\ref{obsNoUCVCDelta} we know that the remaining vertices $\cU$
	can be properly list-colored as well.
	\par
	Conversely, suppose that $G_{3q}$ is properly list-colored. We show that each
	proper coloring must correspond to a truth assignment that satisfies
	$\varphi$. For the sake of a contradiction, suppose that there is a proper
	list-coloring $\gamma_\psi \colon V(G) \to [3q]$ which encodes a truth assignment
	$\psi$ that does not satisfy $\varphi$. Let $C_j \in \varphi$ denote a
	clause which is not satisfied by $\psi$ on variables $x_{j_1}, x_{j_2}$, and
	$x_{j_3}$. For $i \in [3]$, we denote by $\gamma_\psi^i = \gamma_\psi(v_{i,
	\lceil j_i / \log q \rceil })$ the colors of the variable vertices encoding the
	truth assignment of the variables in $C_j$. Since $\psi$ does not satisfy $C_j$
	we know that we added a vertex $w_{\gamma_\psi^1, \gamma_\psi^2,
	\gamma_\psi^3}$ to $\cW$, which is adjacent to $v_{1, \lceil j_1 / \log q
	\rceil}$, $v_{2, \lceil j_2 / \log q \rceil}$, and $v_{3, \lceil j_3 / \log q
	\rceil}$. This means that the colors $\gamma_\psi^1, \gamma_\psi^2$, and
	$\gamma_\psi^3$ appear on a vertex which is adjacent to $w_{\gamma_\psi^1,
	\gamma_\psi^2, \gamma_\psi^3}$ and hence the coloring $\gamma_\psi$ is
	improper, a contradiction.
\end{claimproof}
	We have shown how to reduce an instance of \textProb{3-SAT} to an instance of
	\textProb{$3q$-List-Coloring}. We modify the graph $G_{3q}$ to obtain an
	instance of \textProb{$q$-Coloring} which preserves the correctness of the
	reduction. We add a clique $K_{3q}$ of $3q$ vertices to $G_{3q}$, each of whose
	vertices represents one color. We make each vertex in $v \in \cV \cup
	\cW \cup \cU$ adjacent to each vertex in $K_{3q}$ that represents a color
	which does not appear on~$v$'s list in the list-coloring instance. (The same
	trick was used in the proof of Theorem 6.1 in \cite{LMS11}.) It follows that the graph without~$K_{3q}$ has a proper list-coloring 
	if and only if the new graph has a proper $3q$-coloring.
	\par
	We now compute the size of $G_{3q}$ in terms of $n$ and $q$ and give a bound
	on the size of a vertex cover of $G_{3q}$. We observe that $|\cV| = 3 \lceil
	n/\log q \rceil$, $|\cU| = \cO(q^2 \cdot \lceil n / \log q \rceil)$, and
	clearly, $|V(K_{3q})| = 3q$. To bound the size of $\cW$, we observe that for
	each clause $C_j$, we added $(2^{\log q - 1})^3$ vertices (since we considered
	all triples of bitstrings of length $\log q$ where one character is fixed in
	each string) and hence $|\cW| = \cO(q^3 \cdot m)$ with~$m$ the number of clauses in~$\phi$.
	It is easy to see that $\cV \cup V(K_{3q})$ is a vertex cover of $G_{3q}$ and
	hence $G_{3q}$ has a vertex cover of size $3 \lceil n/\log q \rceil + 3q$.
	\par
	Assuming there is an algorithm that solves \textProb{$q$-Coloring} on
	$\mbox{\textsc{Independent}}+kv$ graphs in time $\cO^*(\theta^k)$ together with an
	application of the above reduction (whose correctness follows from Claim
	\ref{propNoUCCorr}) would yield an algorithm for \textProb{3-SAT} that runs in
	time
	\begin{align*}
		&\theta^{3 \lceil n/\log q \rceil + 3q}\cdot ((q^2 + 3) \lceil n/\log q \rceil + 3q
		+ q^3 \cdot m)^{\cO(1)} = \theta^{3 \lceil n/\log q \rceil + 3q}\cdot (n +
		m)^{\cO(1)}
		\\
		= ~&\theta^{3 \lceil n/\log q \rceil + 3q}\cdot n^{\cO(1)}
		= \cO^*\left(\theta^{3 \lceil n/\log q \rceil + 3q}\right) =
		\cO^*\left(2^{\frac{3 \log \theta}{\log q}n}\right).
	\end{align*}
	Hence, for any $\varepsilon > 0$ we can choose a constant $q$ large enough such that $(3 \log \theta)/(\log q) < \varepsilon$ and Theorem \ref{thmNoUCQColETH} follows.
\end{proof}

\subsection{No Nontrivial Runtime Bound for Path+kv
Graphs}\label{secLBSETHLinFor} 
We now strengthen the lower bound for $\mbox{\textsc{Forest}}+kv$ graphs due to
\cite{LMS11} to the more restrictive class of $\mbox{\textsc{Linear Forest}}+kv$
graphs. The key idea in our reduction is that we treat the clause size in a
satisfiability instance as a constant, which allows for constructing a graph of
polynomial size.
The following lemma describes the clause gadget that will be used in the reduction.
\begin{lemma}\label{lemAuxPathSETH}
For each~$q \geq 3$ there is a polynomial-time algorithm that, given~$(c_1, \ldots, c_m) \in [q]^m$, outputs a $q$-list-coloring instance~$(P, \Lambda)$ where~$P$ is a path of size~$\cO(m)$ containing distinguished vertices~$(\pi_1, \ldots, \pi_m)$, such that the following holds. For each~$(d_1, \ldots, d_m) \in [q]^m$ there is a proper list-coloring~$\gamma$ of~$P$ in which~$\gamma(\pi_i) \neq d_i$ for all~$i$, if and only if~$(c_1, \ldots, c_m) \neq (d_1, \ldots, d_m)$.
\end{lemma}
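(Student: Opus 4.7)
The plan is to build $P$ as a chain of $m$ constant-size ``block gadgets'', one per index $i$. I will assign each $\pi_i$ the two-element list $\Lambda(\pi_i) = \{c_i, z_i\}$, where $z_i \in [q]$ is an auxiliary color distinct from $c_i$ and from the state colors introduced below; the assumption $q \geq 3$ is what gives enough room to pick such a $z_i$. Between consecutive blocks I place a \emph{state vertex} $s_i$ with list $\{A,B\}$, where $A$ and $B$ are two fixed colors in $[q]$, to be read as ``no $\pi_j$ with $j \leq i$ has been colored $c_j$ yet'' (state $A$) and ``some such $\pi_j$ has already been colored $c_j$'' (state $B$). The path then has the form $s_0 - B_1 - s_1 - B_2 - s_2 - \cdots - s_{m-1} - B_m - s_m$, where each block $B_i$ is a constant-size path from $s_{i-1}$ to $s_i$ passing through $\pi_i$ and a few auxiliary vertices. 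The endpoints $s_0$ and $s_m$ get singleton lists $\{A\}$ and $\{B\}$, forcing the initial state to $A$ and the final state to $B$.

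Each block $B_i$ must locally enforce the following nondeterministic transition rules on the triple (input state at $s_{i-1}$, $\gamma(\pi_i)$, output state at $s_i$): \emph{monotonicity}, i.e.\ if the input state is $B$ then the output state is also $B$; \emph{no spurious trigger}, i.e.\ if the input state is $A$ and $\gamma(\pi_i) = z_i$ then the output state remains $A$; and \emph{freedom}, i.e.\ in all other cases (most importantly, $\gamma(\pi_i) = c_i$ from state $A$) the output state may be either $A$ or $B$. The main obstacle is precisely the design of these auxiliary vertices: the two families of forbidden triples (the $B \to A$ transitions for any choice of $\gamma(\pi_i)$, and the $A \to B$ transition when $\gamma(\pi_i) = z_i$) have to be ruled out simultaneously, while all five legal triples must still admit a consistent coloring of the auxiliary vertices. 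This is why the lemma needs $q \geq 3$: with only two colors a path is essentially rigid (unique coloring up to a global flip) and the forbidden triples cannot be separated from the legal ones, whereas a third color introduces exactly the slack that the list design requires. I would verify the block by tabulating the eight possible triples and checking for each whether an extension to the auxiliary vertices exists, matching the five legal triples listed above.

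Granting the block gadget, correctness of the full reduction follows by a straightforward induction along the chain. In one direction, if $(d_1, \ldots, d_m) = (c_1, \ldots, c_m)$, the constraint $\gamma(\pi_i) \neq d_i$ forces $\gamma(\pi_i) = z_i$ for every $i$; by the no-spurious-trigger rule, applied block by block, the state stays in $A$ throughout, contradicting the forced final state $B$, so no proper list-coloring of $P$ exists. In the other direction, if $(d_1, \ldots, d_m) \neq (c_1, \ldots, c_m)$, pick any index $j$ with $d_j \neq c_j$; then $c_j$ is still available in $\Lambda(\pi_j) \setminus \{d_j\}$, so I set $\gamma(\pi_j) = c_j$, keep the state in $A$ through all blocks before $j$ (using no-spurious-trigger where $\gamma(\pi_i) = z_i$ and the $A$-preserving branch of freedom where $d_i = z_i$ forces $\gamma(\pi_i) = c_i$), switch from $A$ to $B$ via freedom at block $j$, and rely on monotonicity thereafter. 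On each remaining block I pick any color for $\pi_i$ in $\Lambda(\pi_i) \setminus \{d_i\}$---nonempty since $|\Lambda(\pi_i)| = 2 > 1 = |\{d_i\}|$---and color the auxiliary vertices according to the block's legal transition.

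Each block has $\cO(1)$ vertices (with the hidden constant depending on $q$), so $P$ has $\cO(m)$ vertices overall, and the whole instance is clearly produced in polynomial time.
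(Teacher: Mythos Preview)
Your high-level plan---a state machine along the path, with a per-index block that can flip the state from $A$ to $B$ only when $\pi_i$ receives color $c_i$---is a reasonable framework, and your correctness argument \emph{conditional} on the block gadget is fine. The gap is that you never construct the block. You list the five legal and three forbidden triples and then write ``I would verify the block by tabulating the eight possible triples'', but there is nothing to tabulate until you specify the auxiliary vertices and their lists. On a path, list-coloring constraints are purely local, so forcing the color of $s_i$ to depend jointly on the color of $s_{i-1}$ and on an interior vertex $\pi_i$, while still admitting the freedom branch, is exactly the technical content of the lemma. You have identified it as ``the main obstacle'' and then skipped it.

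There is also a concrete flaw in what you did write. You require $z_i$ to be distinct from $c_i$ and from both state colors $A,B$, and claim that $q \geq 3$ ``gives enough room to pick such a $z_i$''. But for $q=3$ and $c_i \notin \{A,B\}$ there is no such $z_i$. So either the constraint on $z_i$ has to be relaxed---forcing the still-unspecified block to tolerate $z_i \in \{A,B\}$---or the argument as written only covers $q \geq 4$.

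For comparison, the paper avoids explicit state vertices entirely. Every interior vertex $v_i$ carries a two-element default list chosen so that the coloring is \emph{forced} from a pinned source all the way to a pinned sink, where it conflicts. The vertex $\pi_i$ is then selected inside the $i$-th group so that $c_i$ is \emph{not} on its default list, and $c_i$ is added as a third option. Using that extra color at $\pi_i$ shifts the phase of the forced ripple by one step and repairs the sink conflict; using only default colors preserves it. Thus the ``state'' is the phase of the ripple, the block is just six vertices of default lists, and there is no separate gadget to design or verify. The same phase-shift trick would also let you fill your gap concretely.
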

\begin{proof}
The path~$P$ consists of consecutive vertices~$v_0, v_1, \ldots, v_{6m}, v_{6m+1}$. Vertex~$v_0$ is the source and~$v_{6m+1}$ is the sink. The remaining~$6m$ vertices are split into~$m$ groups~$D_1, \ldots, D_m$ consisting of six consecutive vertices~$v_{6(i-1)+1},\ldots,v_{6i}$ ($i \in [m]$) each. We first add some colors to the lists of these vertices which are allowed regardless of~$(c_1, \ldots c_m)$. Later we will add some more colors to the lists of selected vertices to obtain the desired behavior.

Initialize the `default' list of vertex~$v_i$ for~$i \in [6m]$ to contain the two colors~$\{(i \bmod 3) + 1, (i+1 \mod 3) + 1\}$, so that the first few lists are~$\{2,3\}$,~$\{3,1\}$,~$\{1,2\}$, and so on. Initialize~$\Lambda(v_0) \defeq \Lambda(v_{6m+1}) \defeq \{2\}$. With these lists, there is no proper list-coloring of~$P$. The color for the source vertex is fixed to~$2$, forcing the color of~$v_1$ to~$3$, which forces~$v_2$ to~$1$, and generally forces~$v_i$ to color~$(i + 1) \bmod 3 + 1$. Hence~$v_{6m}$ is forced to~$(6m + 1) \bmod 3 + 1 = 2$, creating a conflict with the sink~$v_{6m+1}$ which is also forced to color~$2$. 

We now introduce additional colors on some lists, and identify the distinguished vertices $\pi_1, \ldots, \pi_m$
among the vertices~$v_{i'}$ (where~$i' \in [6m]$), to allow proper list-colorings under the stated conditions. (Note that in the rest of the proof, we will make use of two symbols for any distinguished vertex, depending on which is more convenient at the time:~$\pi_i$ where~$i \in [m]$ and~$v_{i'}$ where $i' \in [6m]$.)
For a group~$D_i$ of six consecutive vertices, the \emph{interior} of the group consists of the middle four vertices. 
For each index~$i \in [m]$, choose~$\pi_i$ as a vertex from the interior of group~$D_i$ such that~$c_i$ is not on the default list of colors for~$\pi_i$. Since there is no color that appears on all of the default lists of the four interior vertices, this is always possible. Add~$c_i$ to the list of allowed colors for~$\pi_i$. This completes the construction of the list-coloring instance~$(P,\Lambda)$. For an illustration see Figure \ref{figSethPath}.

\begin{figure}
	\centering
	\includegraphics[width=.9\textwidth]{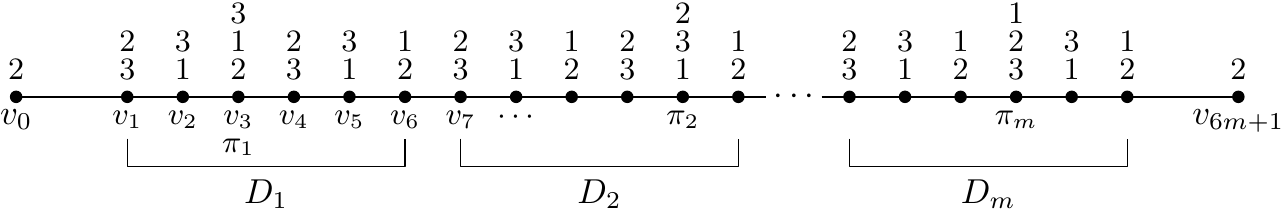}
	\caption{An example \textProb{3-List-Coloring} instance created as in the proof of Lemma \ref{lemAuxPathSETH}, where~$(c_1,\ldots, c_m) = (3,2,\ldots, 1)$.}
	\label{figSethPath}
\end{figure}

It is easy to see that the construction can be performed in polynomial time. To conclude the proof, we argue that~$(P,\Lambda)$ has the desired properties. Observe that if~$(d_1, \ldots, d_m) = (c_1, \ldots, c_m)$, then a proper list-coloring~$\gamma$ of~$P$ in which~$\gamma(\pi_i) \neq d_i = c_i$ for all~$i \in [m]$ would in fact be a proper list-coloring of~$P$ under the default lists before augmentation, which is impossible as we argued earlier. It remains to argue that when~$(d_1, \ldots, d_m)$ differs from~$(c_1, \ldots, c_m)$ in at least one position, then~$P$ has a proper list-coloring~$\gamma$ with~$\gamma(\pi_i) \neq d_i$ for all~$i \in [m]$. 
To construct such a list-coloring, for each index~$i \in [m]$ with~$c_i \neq d_i$, assign vertex~$\pi_i$ the color~$c_i$. Since the vertices~$\pi_i$ are interior vertices of their groups, the distinguished vertices are pairwise nonadjacent and this does not result in any conflicts. For distinguished vertices~$\pi_i$ with~$c_i = d_i$, we will assign~$\pi_i$ a color from the default list of vertex~$\pi_i$; since~$c_i$ is not on the default list this results in the desired color-avoidance. We therefore conclude by verifying that the remaining vertices can be assigned a proper color from their default list.

To do so, assign the source vertex its forced color and propagate the coloring as described above, until we reach the first distinguished vertex~$\pi_{i}$ with~$c_{i} \neq d_{i}$ (where~$i \in [m]$).  
Let~$i' \in [6m]$ denote the index of~$\pi_i$ among all vertices of $P$, i.e.~$\pi_i = v_{i'}$. In the current partial coloring,~$v_{i' - 1}$ received color~$((i' - 1) + 1) \bmod 3 + 1 = i' \bmod 3 + 1$ which is a color on the default list of~$v_{i'}$. Hence, we do not create a conflict between vertices~$v_{i' - 1}$ and~$v_{i'}$ as we gave~$v_{i'}$ the color~$c_i$ which was not on~$v_{i'}$'s default list by construction. The other color on the default list of~$v_{i'}$ is~$(i'+1) \bmod 3 + 1$, which is also on the list of~$v_{i' + 1}$, as~$\Lambda(v_{i'+1}) = \{(i'+1) \bmod 3 + 1, (i'+2) \bmod 3 + 1\}$. Hence, assigning~$v_{i'+1}$ color~$(i'+1) \bmod 3 + 1$ does not create a conflict between~$v_{i'}$ and~$v_{i'+1}$, again since we assigned $v_{i'}$ a color which was not on its default list. 

\begin{itemize}
	\item If~$i$ was the last index for which~$c_i \neq d_i$, then, for all $i'' \in [(i' + 2)..6m]$ we continue giving vertex~$v_{i''}$ color~$i'' \bmod 3 + 1$. This way the sink can be properly list-colored. 
	\item If not, we give~$v_{i' + 2}$ color~$i' \bmod 3 + 1~(= (i' + 3) \bmod 3 + 1)$. Note that since all distinguished vertices are interior vertices of the groups, $v_{i'+2}$ cannot be a distinguished vertex and hence has not been previously assigned a color. We now propagate this coloring along the path as before until we reach the next distinguished vertex which has already been assigned a color.
\end{itemize}
We repeat the construction until all vertices are properly list-colored.
\end{proof}

\begin{theorem}\label{thmQLColLinFor}	
	For any $\varepsilon > 0$ and constant $q \ge 3$, \textProb{$q$-Coloring} on $\mbox{\textsc{Linear Forest}} + kv$ graphs cannot be solved in time $\cO^*((q - \varepsilon)^k)$, unless $\SETH$ fails.
\end{theorem}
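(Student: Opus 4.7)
The plan is to reduce $q'$-SAT for arbitrary constant $q' \geq 3$ to $q$-Coloring on $\mbox{\textsc{Linear Forest}}+kv$ graphs using the clause gadget from Lemma~\ref{lemAuxPathSETH}, and then take $q'$ large enough to contradict $\SETH$. Assuming $q = 2^t$ is a power of $2$ for clarity (otherwise pad with dummy variables), I would group the $n$ variables of a $q'$-SAT instance $\phi$ into $\lceil n/t \rceil$ blocks of $t$ consecutive variables and introduce one modulator vertex $x_b$ per block with list $[q]$; a color of $x_b$ encodes a truth assignment to the $t$ variables of block $b$ via its binary expansion.

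For each clause $C_j$ there is a unique falsifying assignment to its $q'$ variables, which fixes one bit per variable in the $k' \le q'$ blocks containing those variables. The joint colorings of these $k'$ blocks that encode the falsifying assignment---extending it arbitrarily on the other bits---are the \emph{bad joint colorings} for $C_j$; they number at most $q^{k'}/2^{q'} \leq (q/2)^{q'}$, a constant for fixed $q,q'$. For each bad joint coloring $\vec{c} = (c_1, \ldots, c_{k'})$ on blocks $b_1, \ldots, b_{k'}$, invoke Lemma~\ref{lemAuxPathSETH} with parameter $\vec{c}$ to obtain a path gadget $P_{j,\vec{c}}$ with distinguished vertices $\pi_1, \ldots, \pi_{k'}$, and attach it to the modulator by adding a single edge from each $\pi_i$ to $x_{b_i}$. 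Finally, convert the resulting list-coloring instance to $q$-Coloring by appending a $K_q$ clique (representing the $q$ colors) to the modulator in the standard way.

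Correctness follows directly from Lemma~\ref{lemAuxPathSETH}: the edge $\{\pi_i, x_{b_i}\}$ enforces $\gamma(\pi_i) \neq \gamma(x_{b_i})$, so setting $d_i := \gamma(x_{b_i})$ in the lemma, the gadget $P_{j,\vec{c}}$ admits a proper list-coloring compatible with the modulator iff the vector $\vec{e}$ of block-colors satisfies $\vec{e} \neq \vec{c}$---that is, iff the modulator's assignment does not realize this particular bad coloring. Hence $G$ is $q$-colorable iff $\phi$ is satisfiable. For the structural property, each $\pi_i$ is a fresh vertex belonging to a unique gadget and distinct gadgets share no vertex outside the modulator $X$, so removing $X$ (block vertices together with $K_q$) leaves each gadget's internal $\cO(q')$-vertex path intact, and $G - X$ is a disjoint union of paths. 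The modulator has size $k = \lceil n/t \rceil + q$, and the total graph size is polynomial in $n + m$ since each clause contributes only $\cO(1)$ gadgets of constant size.

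An $\cO^*((q-\varepsilon)^k)$ algorithm for $q$-Coloring on $\mbox{\textsc{Linear Forest}}+kv$ would then solve $q'$-SAT in time $\cO^*((q-\varepsilon)^{n/t}) = \cO^*\bigl(2^{n\log_2(q-\varepsilon)/\log_2 q}\bigr) = \cO^*((2-\varepsilon')^n)$ for $\varepsilon' = 2 - 2^{\log_2(q-\varepsilon)/\log_2 q} > 0$ depending only on $q$ and $\varepsilon$. Since $\varepsilon'$ is independent of $q'$, taking $q'$ sufficiently large refutes $\SETH$. The main technical obstacle I anticipate is confirming the attachment-to-modulator step in detail---matching the edge-constraint $\gamma(\pi_i) \neq \gamma(x_{b_i})$ precisely to the lemma's hypothesis $\gamma(\pi_i) \neq d_i$---together with verifying that the multiplicity of bad joint colorings arising from free bits in the involved blocks is correctly handled by creating one gadget per bad joint coloring per clause, without inflating either the modulator size or the overall graph size beyond the required bounds.
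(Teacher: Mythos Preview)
Your approach follows the same overall strategy as the paper's proof---encoding variable assignments in the colors of modulator vertices and using the path gadget of Lemma~\ref{lemAuxPathSETH} to forbid each bad coloring---but there is a genuine gap in your handling of general~$q$. The parenthetical ``otherwise pad with dummy variables'' does not salvage the argument when~$q$ is not a power of~$2$: padding the SAT instance does not change the ratio between~$\log_2 q$ and~$\lfloor \log_2 q \rfloor$, and it is precisely this rounding loss that breaks the final runtime inequality. Concretely, with one modulator vertex per block of~$t = \lfloor \log_2 q \rfloor$ variables you get~$k \approx n/\lfloor \log_2 q \rfloor$, and the chain~$(q-\varepsilon)^k \leq (2-\delta)^n$ requires~$q - \varepsilon < 2^{\lfloor \log_2 q \rfloor}$. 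For~$q = 3$ this forces~$\varepsilon > 1$; for~$q = 7$ it forces~$\varepsilon > 3$. So your reduction only proves the theorem for~$q$ a power of~$2$, whereas the statement is for all~$q \geq 3$.

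The paper repairs exactly this point by introducing an auxiliary integer parameter~$p$ (chosen large depending on~$q$ and~$\varepsilon$): each variable group is encoded by~$p$ modulator vertices, so that~$q^p$ colorings are available to represent the~$2^{\lfloor p \log_2 q \rfloor}$ group assignments. The modulator then has size~$p\lceil n/\lfloor p\log_2 q\rfloor\rceil$, and because~$\lfloor p \log_2 q \rfloor \geq (p-1)\log_2 q$, taking~$p$ large enough that~$\log_q(q-\varepsilon)\cdot \tfrac{p}{p-1} < 1$ makes the rounding loss negligible and recovers an~$(2-\delta)^n$ bound for every~$\varepsilon > 0$ and every~$q \geq 3$ (this is the content of the paper's Claim~\ref{claim:seth:path:runtime}). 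Your construction is precisely the~$p = 1$ instance of this; to prove the full statement you must carry the grouping parameter through, which in turn means the path gadgets are built on vectors of length~$p \cdot s'$ rather than~$s'$ and the ``bad colorings'' must also include colorings that do not lie in the image of the encoding injection.
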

\begin{proof}
	To prove the theorem, we will first show that if \textProb{$q$-List-Coloring}
	on $\mbox{\textsc{Linear Forest}} + kv$ graphs can be solved in time
	$\cO^*((q-\varepsilon)^k)$ for $q \ge 3$ and some $\varepsilon > 0$, then
	\textProb{$s$-SAT} can be solved in time $\cO^*((2-\delta)^n)$ for some $\delta
	> 0$ and any $s \in \cO(1)$, contradicting $\SETH$. By the same argument as in
	the proof of Theorem \ref{thmNoUCQColETH}, we then extend the lower bound to \textProb{$q$-Coloring}.
	\par
	Suppose we have an instance $\varphi$ of \textProb{$s$-SAT} on variables
	$x_1,\ldots,x_n$. We construct a graph $G$ and lists $\Lambda \colon V(G) \to [q]$,
	such that $G$ is properly list-colorable if and only if $\varphi$ is
	satisfiable.
	The first part of the reduction is inspired by the reduction of Lokshtanov et al.~\cite[Theorem 6.1]{LMS11}, which we repeat here for completeness.
		We choose an integer constant $p$ depending on $q$ and $\varepsilon$ 
		and group the variables of
		$\varphi$ into $t$ groups $F_1,\ldots,F_t$ of size $\lfloor \log q^p\rfloor$
		each. We call a truth assignment for the variables in $F_i$ a \emph{group
		assignment}. We say that a group assignment satisfies clause $C_j \in \varphi$
		if $C_j$ contains at least one literal which is set to \textsc{True} by the
		group assignment. For each group $F_i$, we add a set of $p$ vertices
		$v_i^1,\ldots,v_i^p$ to $G$, in the following denoted by $\cV_i$ with
		$\Lambda(v_i^j) = [q]$ for all $i$ and $j$. Each coloring of the vertices
		$\cV_i$ will encode one group assignment of $F_i$. 
		We fix some efficiently computable
		injection 
		$f_i \colon \{0,1\}^{|F_i|} \to [q]^p$ that assigns to each group assignment for $F_i$ a
		distinct $p$-tuple of colors. This is possible since there are $q^p \ge 2^{|F_i|}$ possible colorings of~$p$ vertices.
		For a variable $x_i \in \varphi$ we can identify the 
		set of vertices whose colorings encode the assignment of the group containing $x_i$. Since each group has size $\lfloor \log q^p \rfloor$, the
		truth assignments of a variable $x_i \in \varphi$ are encoded by (some) 
		colorings of the vertices in $\cV_{i'}$, where
		$i' = \lceil i/\lfloor \log q^p \rfloor \rceil$.

We now construct the main part of the graph $G$. Let $C_j \in \varphi$ be a clause on variables $x_{j_1},\ldots,x_{j_{s'}}$, where $s' \in [s]$. The truth assignments of these variables are encoded by the colorings of the vertices in $\cV_{C_j} = \bigcup_{i \in [s']}\cV_{\lceil j_i / \lfloor \log q^p \rfloor \rceil}$. We say that a coloring~$\mu \colon \cV_{C_j} \to [q]$ is a \emph{bad} coloring for~$C_j$ if there is a group for which the coloring does not represent a group assignment, or if the group assignments encoded by~$\mu$ do not satisfy clause~$C_j$. 

For each bad coloring $\mu$ we construct a path using Lemma \ref{lemAuxPathSETH} which ensures that $G$ is not properly list-colorable if~$\mu$ appears on~$\cV_{C_j}$. Let~$j_i' = \lceil j_i / \lfloor \log q^p \rfloor \rceil$ and consider the following vector of colors induced by $\mu$:
\begin{align}
c_\mu = \left(\mu\left(v_{j_1'}^1\right), \ldots, \mu\left(v_{j_1'}^p\right), \ldots, \mu\left(v_{j_{s'}'}^1\right), \ldots, \mu\left(v_{j_{s'}'}^p\right)\right) \label{eqColVectRep}
\end{align}
We add to $G$ a path $P_{c_\mu}$ constructed according to Lemma \ref{lemAuxPathSETH} with $c_\mu$ as the input vector of colors. Let~$(\pi_1,\ldots,\pi_{p\cdot s'})$ denote the distinguished vertices of $P_{c_\mu}$. We make each variable vertex~$v_{j_i'}^\ell \in \cV_{C_j}$ (where~$i \in [s']$ and~$\ell \in [p]$) adjacent to the distinguished vertex~$\pi_{p\cdot (i-1) + \ell}$ in $P_{c_\mu}$, intending to ensure that if all vertices in~$\cV_{C_j}$ are colored according to~$\mu$, then this partial list-coloring on~$G$ cannot be extended to~$P_{c_\mu}$.
Adding such a path for each clause in~$\varphi$ and each bad coloring finishes the construction of~$(G, \Lambda)$.

We first count the number of vertices in~$G$ and then prove the correctness of the reduction. There are~$\cO(n)$ variable vertices and for each of the~$m$ clauses, there are at most~$q^{p\cdot s}$ bad colorings, each of which adds a path on at most~$\cO(p \cdot s)$ vertices to $G$, by Lemma \ref{lemAuxPathSETH}. Hence, the number of vertices in $G$ is at most
\begin{align}
	\cO\left(n + m \cdot q^{p \cdot s}(p\cdot s)\right) = \cO(n + m) = n^{\cO(1)}, \label{eqSETHLinFor}
\end{align}
as~$p, q, s \in \cO(1)$ and~$m = \cO(n^s)$.

\begin{claim}\label{claimSethPathCorrectness}
	$(G, \Lambda)$ is properly~$q$-list-colorable if and only if~$\varphi$ has a satisfying assignment.
\end{claim}
\begin{claimproof}
Suppose~$\varphi$ has a satisfying assignment~$\psi$. For each group~$\cV_i$ the assignment~$\varphi$ dictates a group assignment, which corresponds to a coloring on~$\cV$ by the chosen injection~$f_i$. Let~$\gamma_\psi \colon \bigcup_i \cV_i \to [q]$ denote the coloring of the variable vertices that encodes~$\psi$. We argue that~$\gamma_\psi$ can be extended to the rest of~$G$, respecting the lists~$\Lambda$. For every~$C_j \in \varphi$ on variables~$x_{j_1},\ldots,x_{j_{s'}}$ and every bad coloring~$\mu \colon \bigcup_{i = 1}^{s'} \cV_{j_i'} \to [q]$ w.r.t.~$C_j$ (where~$j_i' = \lceil j_i / \lfloor \log q^p \rfloor \rceil$), we added a path~$P_{c_\mu}$ to~$G$, constructed according to Lemma \ref{lemAuxPathSETH}, whose distinguished vertices we denote by~$(\pi_1, \ldots, \pi_{p\cdot s'})$. Note that $c_\mu$ denotes the vector representation of the coloring $\mu$ as in (\ref{eqColVectRep}). Let~$c_\gamma$ denote the vector representation of~$\gamma$ restricted to the variable vertices~$\cup_{i=1}^{s'} \cV_{j_i'}$, appearing in the same order as in~$c_\mu$. Since~$\gamma_\psi$ encodes a satisfying assignment of~$\varphi$, $c_\mu \neq c_\gamma$. Hence, by Lemma \ref{lemAuxPathSETH}, we can extend~$\gamma_\psi$ to~$P_{c_\mu}$ without creating a conflict; it asserts that there is a proper list-coloring $\gamma'$ on $P_{c_\mu}$ such that~$\gamma(v_{j_i'}^\ell) = c_\gamma(p \cdot (i-1) + \ell) \neq \gamma'(\pi_{p\cdot(i-1) + \ell})$ for all~$i \in [s']$ and~$\ell \in [p]$. Hence, every pair of adjacent vertices between the vertices of~$P_{c_\mu}$ and the vertices encoding the truth assignments of the variables in~$C_j$ can be list-colored properly and we can conclude that~$\gamma_\psi$ can be extended to~$P_{c_\mu}$ and subsequently, to all of~$G$.

Now suppose~$(G, \Lambda)$ has a proper list-coloring~$\gamma$ and assume for the sake of a contradiction that $\varphi$ does not have a satisfying assignment. Then, the restriction of any list-coloring of~$G$ to (some of) the variable vertices~$\bigcup_i \cV_i$ must be a bad coloring for some clause in~$\varphi$. Let~$C_j$ denote such a clause for $\gamma$ and let $c_\gamma$ denote the corresponding vector of colors, restricted to the variable vertex groups that encode the truth assignments to the variables in $C_j$. We added a path~$P_{c_\gamma}$ to $G$ which by Lemma \ref{lemAuxPathSETH} cannot be properly list-colored such that each distinguished vertex gets a color which is different from the color of the variable vertex it is adjacent to. Hence, one of the distinguished vertices of $P_{c_\gamma}$ creates a conflict and we have a contradiction.
\end{claimproof}

Since $G$ consists of the variable vertices attached to a set of disjoint paths, we observe the following.
\begin{observation}\label{obsQLColLinFor}
		$\bigcup_i \cV_i$ is a modulator to \textsc{Linear Forest}.
	\end{observation}
	The previous observation can easily be verified, since $G$ consists of the variable vertices attached to a set of disjoint paths. By Claim \ref{claimSethPathCorrectness} and Observation \ref{obsQLColLinFor} we can
	now finish the proof in the same way as the proof of \cite[Theorem 6.1]{LMS11}, in particular Lemma
	6.4 yields the claim.
	\begin{claim}[Cf.\ Lemma 6.4 in \cite{LMS11}]\label{claim:seth:path:runtime}
		If \textProb{$q$-List-Coloring} on $\mbox{\textsc{Linear Forest}}+kv$ graphs
		can be solved in time $\cO^*((q-\varepsilon)^k)$ for some $\varepsilon < 1$,
		then \textsc{$s$-SAT} can be solved in $\cO^*((2-\delta)^n)$ time, for some
		$\delta < 1$ and any $s \in \cO(1)$.
	\end{claim}
	\begin{claimproof}
		Let~$\lambda \defeq \log_q(q - \varepsilon)^k < 1$, such that~$(q-\varepsilon)^k = q^{\lambda k}$.
		Note that by (\ref{eqSETHLinFor}), the size
		of $G$ is polynomial in $n$, the number of variables of $\varphi$.
		We choose a sufficiently large $p$ such that $\delta' = \lambda \frac{p}{p-1}
		< 1$. Given an instance $\varphi$ of \textProb{$s$-SAT}, we use the above
		reduction to obtain $(G, \Lambda)$, an instance of \textProb{$q$-List-Coloring}.
		Correctness follows from Claim \ref{claimSethPathCorrectness}. By Observation
		\ref{obsQLColLinFor} we know that $G$ has a modulator to \textsc{Linear
		Forest} of size $p \lceil \frac{n}{\lfloor p \log q \rfloor} \rceil$. By the
		choice of $p$ we have
			$\lambda p \lceil \frac{n}{\lfloor p \log q \rfloor} \rceil \le \lambda p
			\frac{n}{(p-1)\log q} + \lambda p \le \delta' \frac{n}{\log q} + \lambda p$. Hence, \textProb{$s$-SAT} can be solved in~$\cO^*(2^{\delta' n + \lambda p}) = \cO^*(2^{\delta' n}) = \cO^*((2 - \delta)^n)$ time for some $\delta > 0$ which does not depend on $s$.
\end{claimproof}
	We have given a reduction from \textProb{$s$-SAT} to
	\textProb{$q$-List-Coloring} on $\mbox{\textsc{Linear Forest}}+kv$ graphs. As
	in the proof of Theorem \ref{thmNoUCQColETH}, we can make the reduction
	work for \textProb{$q$-Coloring} as well by adding a clique $K_q$ of $q$
	vertices to the graph, each of which represents one color and then making each
	vertex in $G$ adjacent to each vertex in $K_q$ which represents a color that is
	not on its list.
	Since this increases the size of the modulator by $q$, which is a
	constant, this does not affect asymptotic runtime bounds and completes the 
	proof of Theorem \ref{thmQLColLinFor}.
\end{proof}

Note that we can modify the reduction in the proof of Theorem
\ref{thmQLColLinFor} to give a lower bound for $\mbox{\textsc{Path}}+kv$ graphs
as well: We simply connect all paths that we added to the graph to one long path,
adding a vertex with a full list between each pair of adjacent paths. 
\begin{corollary}\label{corqColPathLB}
	For any $\varepsilon > 0$ and constant $q \ge 3$, \textProb{$q$-Coloring} on $\mbox{\textsc{Path}}+kv$ graphs cannot be solved in time $\cO^*((q-\varepsilon)^k)$, unless $\SETH$ fails.
\end{corollary}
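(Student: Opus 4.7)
My plan is to reuse the construction from Theorem~\ref{thmQLColLinFor} and stitch the disjoint paths $P_{c_\mu}$ that it produces into one long path, as the hint preceding the corollary suggests. First I would order the paths added by the reduction arbitrarily as $P_1, P_2, \ldots, P_N$. Then, for each $i \in [N-1]$, I would insert a new \emph{connector} vertex $u_i$ with full list $\Lambda(u_i) = [q]$ and add two edges making $u_i$ adjacent to the last vertex of $P_i$ and the first vertex of $P_{i+1}$. Removing the modulator $\bigcup_i \cV_i$ from the modified graph then leaves the single path $P_1 \, u_1 \, P_2 \, u_2 \, \cdots \, u_{N-1} \, P_N$, so $\bigcup_i \cV_i$ is now a modulator to \textsc{Path} of the same size as before.

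For correctness, the forward direction is immediate: given a satisfying assignment of $\varphi$, I would invoke Claim~\ref{claimSethPathCorrectness} on the unmodified graph to obtain a proper list-coloring, and then color each connector $u_i$ with any color in $[q]$ different from the colors of its two neighbors. Such a color exists because each $u_i$ has exactly two neighbors and $q \geq 3$. Conversely, any proper list-coloring of the modified graph restricts to a proper list-coloring of the original graph, since each connector vertex is nonadjacent to every variable vertex and thus irrelevant to the path--variable constraints; Claim~\ref{claimSethPathCorrectness} then forces $\varphi$ to be satisfiable.

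Finally, I would transfer the runtime analysis from Claim~\ref{claim:seth:path:runtime}. The total number of connector vertices is $N - 1 = n^{\cO(1)}$, so the graph size remains polynomial in $n$; the modulator size $p \lceil n / \lfloor p \log q \rfloor \rceil$ is unchanged; and the same calculation shows that a hypothetical $\cO^*((q-\varepsilon)^k)$-time algorithm on $\mbox{\textsc{Path}}+kv$ graphs would solve \textProb{$s$-SAT} in $\cO^*((2-\delta)^n)$ time, contradicting $\SETH$. As in Theorem~\ref{thmQLColLinFor}, I would then apply the standard $q$-clique reduction to transfer the bound from \textProb{$q$-List-Coloring} to \textProb{$q$-Coloring}, which inflates the modulator only by the constant $q$. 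I do not expect any real obstacle: the only subtlety is that connector vertices must always be properly colorable, which the assumption $q \geq 3$ and the constant degree of each $u_i$ guarantee.
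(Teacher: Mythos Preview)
Your proposal is correct and follows exactly the approach the paper sketches in the sentence preceding the corollary: connect the disjoint paths into one long path by inserting, between each consecutive pair, a new vertex with full list~$[q]$. Your detailed correctness argument (colorability of connectors via $q \ge 3$, restriction for the converse) and the observation that connectors acquire no edges to the $K_q$ clique in the list-to-coloring reduction are precisely the fillings-in that the paper leaves implicit.
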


\section{A Tighter Treedepth Boundary}\label{secTD}
In Lemma \ref{lemQLColUBTDepth} we showed that if the $(q+1)$-colorable members
of a hereditary graph class $\cF$ have bounded treedepth, then $\cF$ has
constant-size \textsc{No}-certificates for \textProb{$q$-List-Coloring} and hence $\cF+kv$ has
nontrivial algorithms for \textProb{$q$-(List-)Coloring} parameterized by the
size of a given modulator to $\cF$. One might wonder whether a graph class $\cF+kv$ has
nontrivial algorithms for \textProb{$q$-Coloring} parameterized by a given
modulator to $\cF$ \emph{if and only if} all $(q+1)$-colorable members in $\cF$ have bounded
treedepth. However, this is not the case. In \cite[Lemma 4]{JK13} the authors
showed that \textProb{$q$-Coloring} parameterized by the size of a modulator to
the class \textsc{Cograph} has nontrivial algorithms. Clearly, complete
bipartite graphs are cographs and it is easy to see that (the 2-colorable
balanced biclique) $K_{n, n}$ has treedepth $n+1$. In this section we show that,
unless $\SETH$ fails, bicliques are in some sense the only obstruction to this
treedepth boundary.
\par
We use a combinatorial theorem which in combination with
Corollary \ref{corqColPathLB} will yield the result.
\begin{theorem}[Corollary 3.6 in \cite{JM15}, Theorem 1 in \cite{ALR12}]\label{thm1ALR12}
	For any $s, k \in \bN$ there is a $P(s, k) \in \bN$ such that any graph
	with a path of length $P(s, k)$ either contains an induced path of length $s$, 
	or a $K_k$ subgraph, or an induced $K_{k, k}$ subgraph.
\end{theorem}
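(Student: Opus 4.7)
The plan is to prove the statement by a Ramsey-theoretic argument on the chord structure of a long path. Fix a path $v_0, v_1, \ldots, v_N$ in $G$ and call an edge $v_iv_j \in E(G)$ with $|i-j| \ge 2$ a \emph{chord}. The starting observation is that if some window of $s+1$ consecutive path vertices contains no chord among its members, then those vertices themselves form an induced path of length $s$ and we are done. Thus, choosing $N$ large enough, we may assume chord-density along the path: every window of $s+1$ consecutive vertices is incident to at least one chord.

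I would next chop the path into many short blocks of length $\ell = \Theta(s)$ and form an auxiliary graph whose vertices are blocks and whose edges record ``some chord runs between these two blocks''. By the chord-density hypothesis this auxiliary graph has many edges, so Ramsey's theorem applied to it yields a large sub-collection of blocks that are pairwise joined by at least one chord. Within each selected block I would extract two deep-interior representatives $a_i, b_i$ realising such chords to the other selected blocks, chosen so that the intended chord $a_ib_i$ lies entirely inside the block. A further layer of Ramsey cleanup applied to the induced adjacency patterns among $\{a_i\}$, among $\{b_i\}$, and across $\{a_i\} \times \{b_j\}$ then thins the index set to $k$ on which each of these adjacency patterns is monochromatic.

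The main obstacle is controlling non-edges so as to obtain a truly \emph{induced} $K_{k,k}$: both sides $A = \{a_1, \ldots, a_k\}$ and $B = \{b_1, \ldots, b_k\}$ must be independent sets and every cross pair $a_ib_j$ must be an edge. Each Ramsey cleanup step has two possible outcomes, and whenever the ``wrong'' colour class is returned we instead obtain a clique of size $k$ among the representatives, which directly produces the $K_k$ subgraph conclusion and terminates the argument favourably. Composing these Ramsey steps --- one for the block-to-block pattern and three for the intra-$A$, intra-$B$, and $A$-to-$B$ adjacencies --- yields a function $P(s,k)$ that is finite but tower-type in $s$ and $k$. Since the result is attributed to \cite{JM15, ALR12}, in the actual exposition I would invoke it as a black box; the plan above records what I expect their combinatorial argument must achieve.
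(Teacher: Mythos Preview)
The paper does not prove this theorem; it is quoted from the cited references and used as a black box in the subsequent argument. There is therefore no proof in the paper to compare your proposal against, and your final sentence correctly anticipates that citing the result is all that is called for here.

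On your sketch itself: the Ramsey-on-chord-structure strategy is the right flavour, but as written it has a gap. From ``every window of $s+1$ consecutive vertices contains a chord among its members'' you want to conclude that the auxiliary block graph has many edges, yet if blocks have length $\Theta(s) \ge s+1$, a chord inside a window may lie entirely within one block and contribute no inter-block edge at all; you would need blocks strictly shorter than the window length, or a more careful accounting. Likewise, after Ramsey gives a family of pairwise-connected blocks, extracting fixed representatives $a_i, b_i$ that \emph{simultaneously} witness all the required cross-edges is not automatic, since different block pairs may be joined by chords landing at different vertices within a block; a further pigeonhole or Ramsey layer is needed to pin down common endpoints. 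These issues are repairable, and the cited references do close them, but the sketch as it stands does not.
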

\begin{theorem}
	Let $\cF$ be a hereditary class of graphs for which there exists a 
	$t \in \mathbb{N}$ 
	such that $K_{t, t}$ is not contained in $\cF$, let~$q \geq 3$, and suppose $\SETH$
	is true. Then, \textProb{$q$-Coloring} parameterized by a given vertex modulator to
	$\cF$ of size $k$ has $\cO^*((q-\varepsilon)^k)$ time algorithms for some $\varepsilon >
	0$, if and only if all $(q+1)$-colorable graphs in $\cF$ have bounded treedepth.
\end{theorem}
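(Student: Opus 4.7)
The plan is to prove both directions using results already in the paper. The ``if'' direction is immediate: if the $(q+1)$-colorable members of $\cF$ have treedepth bounded by some constant $d$, then Lemma~\ref{lemQLColUBTDepth} supplies $q^d$-size \textsc{No}-certificates for \textProb{$q$-List-Coloring}, and Theorem~\ref{thmAlgQListColNoCert} then delivers an $\cO^*((q-\varepsilon)^k)$-time algorithm for \textProb{$q$-List-Coloring} on $\cF+kv$, which solves \textProb{$q$-Coloring} as a special case.

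For the ``only if'' direction I would argue the contrapositive: assuming the $(q+1)$-colorable members of $\cF$ have unbounded treedepth, I will show that $\cF$ contains every path as an induced subgraph. Given this, any $\mbox{\textsc{Path}}+kv$ instance is also an $\cF+kv$ instance with the same modulator, so an $\cO^*((q-\varepsilon)^k)$-time algorithm for \textProb{$q$-Coloring} on $\cF+kv$ would yield such an algorithm on $\mbox{\textsc{Path}}+kv$, contradicting Corollary~\ref{corqColPathLB} under $\SETH$.

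To extract the paths, I would first invoke the classical sparsity-theoretic fact (see \cite[Chapter 6]{NO12}) that a class of graphs has bounded treedepth if and only if the lengths of paths appearing in its members as (not necessarily induced) subgraphs are uniformly bounded. By assumption, then, there is a sequence of $(q+1)$-colorable graphs $G_1, G_2, \ldots \in \cF$ whose longest paths (as subgraphs) tend to infinity. Fix an arbitrary $s \in \bN$, set $k \defeq \max(q+2, t)$, and pick $G_i$ that contains a path of length at least $P(s,k)$. Applying Theorem~\ref{thm1ALR12} yields one of three outcomes: (i) an induced $P_s$, (ii) a $K_k$ subgraph, or (iii) an induced $K_{k,k}$. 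Outcome (ii) is impossible because $G_i$ is $(q+1)$-colorable while $K_k$ with $k \geq q+2$ has chromatic number $\geq q+2$. Outcome (iii) is impossible because $K_{k,k}$ for $k \geq t$ contains $K_{t,t}$ as an induced subgraph, which by hereditariness would place $K_{t,t}$ in $\cF$, contradicting the hypothesis. Hence (i) must hold, and hereditariness of $\cF$ gives $P_s \in \cF$; as $s$ was arbitrary, every path lies in $\cF$.

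The main subtlety---and the reason the assumption that $\cF$ excludes some $K_{t,t}$ is essential---is ensuring that outcomes (ii) and (iii) of Theorem~\ref{thm1ALR12} can be ruled out simultaneously by a single choice of constant $k$: the $(q+1)$-colorability of the sampled graphs handles (ii), while the exclusion of $K_{t,t}$ handles (iii). Once the induced paths have been isolated in $\cF$, the reduction to Corollary~\ref{corqColPathLB} is routine and completes the proof.
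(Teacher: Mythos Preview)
Your proposal is correct and follows essentially the same approach as the paper: the ``if'' direction via Lemma~\ref{lemQLColUBTDepth} and Theorem~\ref{thmAlgQListColNoCert}, and the ``only if'' direction by showing that unbounded treedepth among the $(q+1)$-colorable members forces arbitrarily long (non-induced) paths, then applying Theorem~\ref{thm1ALR12} with $k = \max(q+2,t)$ to extract arbitrarily long induced paths and invoking Corollary~\ref{corqColPathLB}. The only cosmetic difference is that the paper supplies the treedepth--path-length connection via a short DFS-tree argument rather than citing \cite{NO12}.
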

\begin{proof}
Assume the stated conditions hold for~$\cF$ and~$t$. In one direction, if all the~$(q+1)$-colorable graphs in~$\cF$ have their treedepth bounded by a constant, then there are constant-size \textsc{No}-certificates for \textProb{$q$-List-Coloring} on~$\cF$ by Lemma~\ref{lemQLColUBTDepth}, implying the existence of nontrivial algorithms by Theorem~\ref{thmAlgQListColNoCert}.

For the other direction, suppose that there is no finite bound on the treedepth of~$(q+1)$-colorable graphs in~$\cF$. We claim that~$\cF$ contains all paths, which will prove this direction using Corollary~\ref{corqColPathLB}. If the longest (simple) path in a graph~$G$ has length~$k$, then~$G$ has treedepth at most~$k$ since any depth-first search tree forms a valid treedepth decomposition, and has depth at most~$k$ since all its root-to-leaf paths are paths in~$G$. Hence a graph of treedepth more than~$n$ contains a path of length more than~$n$. Since the $(q+1)$-colorable graphs in~$\cF$ have arbitrarily large treedepth, the preceding argument shows that for any~$n$ there is a $(q+1)$-colorable graph in~$\cF$ containing a path of length more than~$n$. In particular, for any~$n$ there is a $(q+1)$-colorable graph~$G_n$ in~$\cF$ containing a (not necessarily induced) path of length~$P(n, \max(t, q+2))$, the Ramsey number of Theorem~\ref{thm1ALR12}. Hence graph~$G_n$ contains an induced path of length~$n$, a clique of size~$\max(t, q+2)$, or an induced biclique with sets of size~$\max(t,q+2)$. Since a~$(q+2)$-clique is not~$(q+1)$-colorable,~$G_n$ contains no such clique. If~$G_n$ contains an induced biclique subgraph with sets of size~$\max(t,q+2)$, then since~$\cF$ is hereditary it would contain~$K_{t,t}$, which contradicts our assumption on~$\cF$. Hence~$G_n$ contains an induced path of length~$n$, implying that~$\cF$ contains the induced path of length~$n$ since it is hereditary. As this holds for all~$n$, class~$\cF$ contains all paths, implying by Corollary~\ref{corqColPathLB} and SETH that there are no nontrivial algorithms for \textProb{$q$-List-Coloring} parameterized by the size of a given vertex modulator to~$\cF$.
\end{proof}

\section{Conclusion}\label{secConc}
In this paper we have presented a fine-grained parameterized complexity analysis
of the \textProb{$q$-Coloring} and \textProb{$q$-List-Coloring} problems.
We showed that if a graph class $\cF$ has \textsc{No}-certificates for
\textProb{$q$-List-Coloring} of bounded size or if the $(q+1)$-colorable members
of $\cF$ (where $\cF$ is hereditary) have bounded treedepth, then there is an algorithm that solves
\textProb{$q$-Coloring} on graphs in $\cF+kv$ (graphs with vertex modulators of
size $k$ to $\cF$) in time $\cO^*((q-\varepsilon)^k)$ for some $\varepsilon >
0$ (depending on $\cF$). The parameter treedepth revealed itself as a boundary
in some sense:
We showed that $\mbox{\textsc{Path}}+kv$ graphs do not have $\cO^*((q-\varepsilon)^k)$ time
algorithms for any $\varepsilon > 0$ unless $\SETH$ is false --- and paths are
arguably the simplest graphs of unbounded treedepth. Furthermore we proved 
that if a graph class $\cF$ does not have large bicliques, then $\cF+kv$ graphs 
have $\cO^*((q-\varepsilon)^k)$ time algorithms, for some $\varepsilon > 0$, if and 
only if $\cF$ has bounded treedepth.
\par 
Treedepth is an interesting graph parameter which in many cases also allows for
polynomial space algorithms where e.g.\ for treewidth this is typically
exponential.
It would be interesting to see how the problems studied by Lokshtanov et al.\
\cite{LMS11} behave when parameterized by treedepth. Naturally, a fine-grained
parameterized complexity analysis as we did might be interesting for other
problems as well.
\begin{openProblem*}
	Consider a different problem than \textProb{$q$-Coloring}, for example another
	problem studied in \cite{LMS11}. For which parameters in the hierarchy can we
	improve upon the base of the exponent of the $\SETH$-based lower bound? Does
	the parameter treedepth establish a diving line in this sense as well?
\end{openProblem*}

\bibliographystyle{plain}
\bibliography{References_pc_chromatic}

\end{document}